\newcommand{\dotarrow}{\Rightarrow}
\newcommand{\B}{\mathcal{B}}
\newcommand{\C}{\mathcal{C}}
\newcommand{\U}{\mathbb{U}}
\newcommand{\bR}{\mathbb{R}}
\newcommand{\bU}{\mathbb{U}}
\newcommand{\ra}{\rightarrow}
\newcommand{\xra}{\xrightarrow}
\newcommand{\bphi}{\bm{\phi}}
\newcommand{\bpsi}{\bm{\psi}}
\newtheorem{Def}{Definition}[subsection]
\newtheorem{Pro}[Def]{Proposition}
\newtheorem{Cor}[Def]{Corollary}
\newtheorem{Exa}[Def]{Example}
\DeclareMathOperator{\Hom}{Hom}
\DeclareMathOperator{\Obj}{Obj}
\DeclareMathOperator{\id}{id}
\DeclareMathOperator{\im}{im}
\newcommand{\TheTitle}{On the Abstract Structure of\\ the Behavioral Approach to Systems Theory\thanks{The paper is based on Chapter 5 of E. M. Adam's doctoral thesis, see \cite{ADAM:Dissertation}. }
  }
\title{{\TheTitle}}
\author{ Elie M. Adam \thanks{Both authors are with the Laboratory for Information and Decision Systems (LIDS) at the Massachusetts Institute of Technology (MIT).  Emails: \texttt{eadam@mit.edu} and \texttt{dahleh@mit.edu}}
  \and
  Munther A. Dahleh}
\date{}
\begin{document}

\maketitle
\begin{abstract}
  We revisit the behavioral approach to systems theory and make explicit the abstract pattern that governs it.  Our end goal is to use that pattern to understand interaction-related phenomena that emerge when systems interact.  Rather than thinking of a system as a pair $(\U,\B)$, we begin by thinking of it as an injective map $\B\ra\U$.  This relative perspective naturally brings about the sought structure, which we summarize in three points. First, the separation of behavioral equations and behavior is developed through two spaces, one of syntax and another of semantics, linked by an interpretation map. Second, the notion of interconnection and variable sharing is shown to be a construction of the same nature as that of gluing topological spaces or taking amalgamated sums of algebraic objects.  Third, the notion of interconnection instantiates to both the syntax space and the semantics space, and the interpretation map is shown to preserve the interconnection when going from syntax to semantics.  This pattern, in its generality, is made precise by borrowing very basic constructs from the language of categories and functors.

  \vspace{0.1in}
  
\begin{keyword}
Behavioral approach, systems theory, interconnection, categories and functors, syntax and semantics.
\end{keyword}

\end{abstract}

\section{Introduction}

Our primary concern is studying and understanding phenomena or behavior that arise from the interaction of several systems. We may describe the common situation of interest as small entities of systems coming together, interacting, and producing, as an aggregate, a behavior that would not have occurred without interaction.  These situations are fundamental, and appear in countless settings, including contagion effects in societal systems and cascading failure in infrastructures.  But a viable understanding of the emergent generated behavior ought to be preceded by an understanding of what it means to interconnect systems and have them interact.  The paper begins as an attempt to grasp an understanding of interconnection.

The behavioral approach to systems theory, initiated by J. C. Willems in the 1980's proves to be a pedagogically responsible and natural approach to understanding interconnected systems.  It provides, among other things, a sound inclusive definition of an open system, one that interacts with its environment, and develops interconnection through the natural notion of variable sharing. Rather than viewing a system as an input/output device, the behavioral approach views a system as a set---termed, \emph{behavior}---of trajectories or outcomes deemed allowable by the laws of a mathematical model. Some of the theory's distinctive flair may be sketched through three points.

\begin{itemize}\setlength\itemsep{0em}
 \item[i.] Behaviors are described by equations---termed, \emph{behavioral equations}---and different equations may describe the same behavior.  Intrinsic systemic properties then ought to be properties of the behavior, and not of the descriptive behavioral equations.
 \item[ii.] Many systems are not fundamentally input-output devices, and as such signal-flow diagrams should not be the fundamental interconnection constructs.  Instead, systems are interconnected through the notion of sharing variables.
 \item[iii.] Interconnecting systems on the behavior level via variable sharing coincides with the descriptive interconnection of systems on the equational level.
\end{itemize}

Our interest begins with the following questions. What is the abstract pattern that makes this theory so natural?  How can we mathematically abstract away that pattern, simplify it, and use it in different settings?

Our approach, in answering those questions, consists of developing a relative point of view. Instead of thinking of a system as a pair $(\U,\B)$ with $\B \subseteq \U$, as done in the behavioral approach,  we first explicitly think of a system as an injective map $\B \ra \U$ for arbitrary pairs of $\B$ and $\U$. This view leaves the definition of a system unchanged, but it forces us to introduce transformations, or morphisms, of systems, so as to relate the systems together. The analysis then proceeds naturally, and the findings are summarized again in three points, to reflect those raised above:

\begin{itemize}\setlength\itemsep{0em}
\item[i.] The separation of behavioral equations and behavior suggests a development of two \emph{spaces}, one of syntactical objects (ref. the behavioral equations), and one of semantical objects (ref. the behavior) linked by an interpretation map.
\item[ii.] The morphisms introduce a notion of subsystem and controlled-system. We show that interconnection, as variable sharing, amounts to \emph{gluing} two systems on a common subsystem to yield a controlled-system.  The intuition thus provided by the behavioral approach for interconnection through variable sharing lies on the same level as that of gluing topological spaces or taking amalgamated sums of algebraic objects.
\item[iii.] The notion of interconnection instantiates to both the syntax space and the semantics spaces, and the interpretation map is shown to preserve the interconnection when going from syntax to semantics.
\end{itemize}

The notion of interconnection in the behavioral approach is mathematically clear when the behaviors live in the same universum. It is less clear what that notion amounts to mathematically when the behaviors live in different universa.  A recipe for interconnecting systems in different universa exists, but it does not directly lend itself to mathematical analysis. The key contribution lies in expressing the notion of interconnection through the notion of a pushout.  In a linear setting (e.g., in linear systems), this will allow us to express interconnection of systems in terms of exact sequences from commutative algebra.  Phenomena that emerge from the interaction of systems may then be seen to arise from a certain loss of exactness.  We refer the reader to \cite{ADAM:Dissertation} (e.g., Ch 2) for more details.  The point (ii) further shows a duality between systems with latent variables on one end, and systems that are controlled on another.  Altogether, the abstract notions exhibited by the behavioral approach may be instantiated to different settings.  When those settings are interpreted as defining systems, we recover the intuition provided by the behavioral approach.

We begin by a brief review, in Section 2, of the essential features in the behavioral approach, focusing particularly on three themes: the behavioral equations, variable sharing and latent variables. We then perform, in Section 3, our shift of view to systems being injective maps, and revisit the three themes, highlighting the structure.  The paper will introduce very basic elements of the functorial language (i.e., the language of categories and functors) along the way as needed.  We finally end, in Sections 4 and 5, with a recap of the big picture, and a sketch of where the work leads to.

To dilute the abstraction, we will illustrate the claims by an example. The example runs throughout the paper, and consists, at various stages, of an interplay between two resistive circuits labelled $(S)$ for series and $(P)$ for parallel.
\begin{center}
\begin{circuitikz}[scale=0.6]\draw
  (-0.5,1.5) node[]{$(S)$}
  (6.5,1.5) node[]{$(P)$}
  (0,3) node[label=$a$]{} to[R, o-o] (4,3) node[label=$c$]{}
  (0,0) node[label=below:$b$]{} to [short, o-o] (4,0) node[label=below:$d$]{}
  (7,0) node[label=below:$f$]{} to [open, o-o] (7,3) node[label=$e$]{} -- (11,3) node[label=$i$]{} to[open, o-o] (11,0) node[label=below:$j$]{} -- (7,0)
  (9,3) node[label=$g$]{} to[R, o-o] (9,0) node[label=below:$h$]{};
\end{circuitikz}
\end{center}

In functorial language, the behavioral approach suggests two categories, a syntax category (e.g., reflecting the behavioral equations) and a semantics category (e.g., reflecting the behavior).  The categories are linked via an interpretation functor.  Interconnection (e.g., through variable sharing) in both categories consists of taking pushouts or more generally colimits (or dually pullback/limits in the case of the behavioral approach).  The interpretation functor preserves pushouts or more generally colimits, and may be desired to admit a right adjoint (e.g., reflecting that every behavior admits a \emph{simplest} behavioral equation representation).

\tableofcontents

\section{Jan Willems' behavioral approach}

The behavioral approach to systems theory begins from the premise that a mathematical model acts as an exclusion law.  The phenomenon we wish to model produces events or outcomes that live in a given set $\U$. The laws of the model (viewed descriptively) state that some outcomes in $\U$ are possible, while others are not. The model then restricts the outcomes in $\U$ to only those that are allowed possible by the laws of the model. The set of possible outcomes is then called the \emph{behavior} of a model. We refrain from using the term \emph{model}, and replace it by \emph{system}. Material in this section may be found in \cite{POL1998}, \cite{WIL1991} and \cite{WIL2007}.

\begin{Def}[cf. \cite{POL1998}, Section 1.2.1]
  A Willems system is a pair $(\U,\B)$ where $\U$ is a set, called the universum---its elements are called \emph{outcomes}---and $\B$ a subset of $\U$ called the \emph{behavior}.
\end{Def}

The behavioral approach links naturally to standard ideas.  A dynamical system may be obtained by considering universa of the form $\mathbb{W}^{\mathbb{T}}$, the set of maps from $\mathbb{T}$ to $\mathbb{W}$.  The set $\mathbb{T}$ embodies the time axis, and $\mathbb{W}^{\mathbb{T}}$ then represents timed trajectories taking values in $\mathbb{W}$. An input-output structure can be recovered by thinking of a map as a relation.  Every set map $f: A \ra B$ defines a relation $R = \{(a,fa)\} \subseteq A \times B$ which yields a Willems system $(A\times B, R)$. The universa may also be endowed with additional structure, e.g., a vector space structure.  A Willems system $(\U,\B)$ may then be termed $k$-linear if $\U$ is a vector space over the field $k$, and $\B$ is a linear subspace of $\U$.  Time invariance, among other things, for dynamical systems can be further brought into the picture.  We refer the reader to \cite{POL1998}, \cite{WIL1991} and \cite{WIL2007} for the details.

\begin{Exa}
Considering the circuits (S) and (P),
\begin{center}
\begin{circuitikz}[scale=0.6]\draw
  (-0.5,1.5) node[]{(S)}
  (6.5,1.5) node[]{(P)}
  (0,3) node[label=a]{} to[R, l=R ,o-o] (4,3) node[label=c]{}
  (0,0) node[label=below:b]{} to [short, o-o] (4,0) node[label=below:d]{}
  (7,0) node[label=below:f]{} to [open, o-o] (7,3) node[label=e]{} -- (11,3) node[label=i]{} to[open, o-o] (11,0) node[label=below:j]{} -- (7,0)
  (9,3) node[label=g]{} to[R, o-o] (9,0) node[label=below:h]{};
\end{circuitikz}
\end{center}
we declare the variables in play to be the voltage potentials $v_a, \cdots, v_j$, one for each labelled node, and currents $i_{ac}, i_{bd}, i_{eg}, i_{gi}, i_{gh}, i_{fh}, i_{hj}$, one for each consecutive pair of labelled nodes. We define $(\U_S,\B_S)$ and $(\U_P,\B_P)$ to be the Willems systems corresponding to (S) and (P). The universum $\U_S$ is the free $\bR$-vector space (isomorphic to $\bR^6$) generated by the basis $\{v_a,v_b,v_c,v_d,i_{ac},i_{bd}\}$. The behavior $\B_S$ is the subset $\{ (V_a,V_b,V_c,V_d, I_{ac}, I_{bd}) \in \U_S : V_a - V_c = RI_{ac} \text{ and } V_b = V_d\}$.  Similarly, $\U_P$ is the $\bR$-vector space (isomorphic to $\bR^{11}$) generated by the variables that remain. The behavior $\B_P$ is the subset of $\U_P$ that satisfy KCL, KVL and Ohm's law. 
\end{Exa}

\subsection{Behavioral equations}

Systems may be generally described by equations.  The behavior then consists of the outcomes for which \emph{balance equations} are satisfied.  

\begin{Def}[cf. \cite{POL1998}, Section 1.2.2]
  Let $\U$ be a universum, $\mathbb{E}$ a set, and $f_1, f_2 : \U \ra \mathbb{E}$ maps.  The Willems system $(\U,\B)$ with $\B = \{u\in\U : f_1(u) = f_2(u)\}$ is said to be described by \emph{behavioral equations} and is denoted by $(\U,\mathbb{E},f_1,f_2)$.  We call  $(\U,\mathbb{E},f_1,f_2)$ a \emph{behavioral equation representation} of $(\U,\B)$.
\end{Def}

If both $\U$ and $\mathbb{E}$ share a linear structure (e.g., are vector spaces), then the behavior of the representation $(\U,\mathbb{E},f_1,f_2)$ is the kernel of $f_1 - f_2$.  In such a setting, we talk about kernel representations of systems.

Systems are described in many situations using inequalities rather than equalities.  Such a change may be remedied by considering $\mathbb{E}$ to be ordered. For instance, if $\mathbb{E}$ is a partially ordered set where every pair of elements $(a,b)$ admit a least upper bound $max(a,b)$, then $f(u) \leq g(u)$ if, and only if, $max(f(u),g(u)) = g(u)$. 

A system $(\U,\B)$ may have different behavioral equation representations of it.  It is then not the equations themselves that are essential, but rather the solution to those equations.  This remark is the basis for a separation between syntax and semantics.  The behavioral equations represent the syntax, while the semantics, the objects behind the syntax, are captured by the behavior.

\begin{Exa}
The systems $(\U_S,\B_S)$ and $(\U_P,\B_P)$ possess a linear structure. Both $\B_S$ and $\B_p$ are the solution set of a system of linear equations.  We can then explicitely define matrices (or linear maps) with the equations as rows, and obtain behavioral equation representations of the two systems.  
\end{Exa}

\subsection{Interconnection and variable sharing}

The behavioral approach enables us to define interconnections of systems. Let $(\U,\B)$ and $(\U,\B')$ be Willems systems with representations $(\U,\mathbb{E},f,g)$ and $(\U,\mathbb{E'},f',g')$, respectively.  Their interconnection is the system represented by $(\U,\mathbb{E}\times\mathbb{E'},f\times f',g\times g')$.

\begin{Def}
  The interconnection of the systems $(\U,\B)$ and $(\U,\B')$ is the system $(\U,\B\cap\B')$. \hfill\qed
\end{Def}

To interconnect two systems $(\mathbb{V} \times \U, \B)$ and $(\U \times \mathbb{V}', \B')$ that share only a part $\U$ of their universa in common, we first lift them to two equivalent systems $(\mathbb{V} \times \U \times \mathbb{V}', \B\times\mathbb{V}')$ and $(\mathbb{V} \times \U \times \mathbb{V}', \mathbb{V}\times \B')$, and then intersect the lifted behaviors.

\begin{Def}\label{Def:InterconnectDifferent}
  The interconnection of $(\mathbb{V} \times \U,\B)$ and $(\U \times \mathbb{V}',\B')$ by sharing $\U$ is the system $(\mathbb{V}\times\U\times \mathbb{V}',(\B\times\mathbb{V}') \cap (\mathbb{V}\times\B'))$. \hfill \qed
\end{Def}

Variable sharing thus consists of declaring parts of the universa as representing the same outcomes, and carrying out the above procedure of identification. The identification is the basis for the \emph{gluing} mentioned in the introduction.  Definition \ref{Def:InterconnectDifferent} provides a means to interconnect systems in different universa.  However, that means can be mathematically cumbersome.  Part of the relative perspective to be developed goes into making interconnection less cumbersome when different universa are involved.  
\begin{Exa}
  We will interconnect (S) and (P) by connecting terminal $c$ to $e$, and $d$ to $f$ to obtain the circuit:
\begin{center}
\begin{circuitikz}[scale=0.6]\draw
  (0,3) node[label=a]{} to[R, o-o] (4,3)
  (0,0) node[label=below:b]{} to [short, o-o] (4,0)
  (4,0) node[label=below:{d=f}]{} to [open, o-o] (4,3) node[label={c=e}]{} -- (8,3) node[label=i]{} to[open, o-o] (8,0) node[label=below:j]{} -- (4,0)
  (6,3) node[label=g]{} to[R, o-o] (6,0) node[label=below:h]{};
\end{circuitikz}
\end{center}  
To perform the interconnection, we need to identify the variables $v_c$, $v_d$, $i_{ac}$ and $i_{bd}$ with $v_e$, $v_f$, $i_{eg}$ and $i_{fh}$, respectively. The systems $(\U_S,\B_S)$ and $(\U_P,\B_P)$ need to be lifted to a common universum $\U$, in accordance with Definition \ref{Def:InterconnectDifferent}, where every pair of the to-be-matched variables corresponds to the same dimension.  The lifted behavior are then intersected. 
\end{Exa}

\subsection{Latent variables}

The universum typically represents the variables that we wish to model.  It is however often the case that auxiliary variables are needed.  Adding auxiliary variables might lead to simpler behavioral equation representations.  Interconnecting two systems that live in different universa will also force us to add auxiliary variables.  Latent variables of auxiliary interest are then appended to the universum of the original manifest variables.

\begin{Def}[cf. \cite{POL1998}, Section 1.2.3]
  A Willems system with latent variables is defined as a triple $(\U,\U_l,\B_f)$ with $\U$ the universum of the manifest variables, $\U_l$ the universum of latent variables and $\B_f \subseteq \U \times \U_l$ the full behavior.  It defines the manifest Willems system $(\U,\B)$ with $\B:= \{u \in \U : (u,l)\in \B_f \text{ for some }l \in \U_l\}$ where $\B$ is the manifest behavior.  We call $(\U,U_l,\B_f)$ a latent variable representation of $(\U,\B)$.
\end{Def}

Latent variables equip us with the extra flexibility needed in the modelling exercise. The theory of latent variables will thereafter appear in the notion of subsystems.

\begin{Exa}
 We may abstract the circuit (P) into a two-port blackbox, by declaring the universum to consist of the voltage potentials and currents at the four terminals $e$, $f$, $i$ and $j$.  The corresponding Willems system $(\U_{two-port},\B_{two-port})$ consists of $\bR^8$ as a universum, and the set of tuples in $\bR^8$ that can physically coincide as the behavior.  We define $U_l$ to be $\bR^3$ generated by the variables $v_g$, $v_h$ and $i_{gh}$.  The Willems system $(\U_P,\B_P)$ is equivalent to $(\U_{two-port},\U_l,\B_P)$ and is then a latent variable representation of $(\U_{two-port},\B_{two-port})$.
\end{Exa}

\subsection{The immediate mathematical structure}
Let $\U$ be a fixed universum, and suppose that every subset of $\U$ is a potential behavior. We can partially order the behaviors in $\U$ by inclusion, and get a lattice $\mathcal{L}_{\U}$.  The meet (min) in the lattice corresponds to set-intersection, and the join (max) corresponds to set-union.  Interconnection of systems corresponds then to taking meets in the lattice. The properties of the lattice (as well as its existence) changes as different mathematical structures are imposed on the universa and the behaviors. For instance, if $U$ is a vector space and the behaviors are the linear subspaces of $\U$, then the lattice of behaviors is modular. For a thorough study along those lines, we invite the reader to look at \cite{SHA2001}.

\section{The relative point of view}

We bring about the abstract structure by adopting a relative point of view. Instead of thinking of a system as a pair of sets $(\U,\B)$ where $\B \subseteq \U$, we explicitly think of it as an injective map $\B \ra \U$ of sets.

{\bf Remark:} We only consider, in this section, universa and behaviors that are sets, without any additional structure.  We can nevertheless equip the systems with more structure (such as an $R$-module structure) while keeping the insight and the result statements unchanged.  We would however need to equip the set maps with a compatible structure. For instance, in the case of $R$-modules, the set maps would have to be replaced by $R$-linear maps.

\subsection{Morphisms of systems}\label{morphism}

A system is then an \emph{injective} map $B \ra U$ of sets. We will keep the labels $B$ and $U$, instead of using other letters, simply to make the connection explicit with the behavioral approach as described.  We will now revisit the above theory through the lens of injective maps.  However, the systems thus far, as simply a collection of injective maps without any further structure, are unrelated.  They cannot be interconnected and we cannot discuss most of the themes addressed in the previous section.  We remedy this issue by defining a morphism of systems.  The systems considered along with their morphisms will provide us with a \emph{sandbox} to develop the theory we want.

\begin{Def}
  Let $s: B \rightarrow U$ and $s': B' \rightarrow U'$ be two systems. A morphism $\bphi$ from $s$ to $s'$ denoted by $\bphi:s \dotarrow s'$ is a pair of set maps $(\phi_B,\phi_U)$ with $\phi_B:B \rightarrow B'$ and $\phi_U:U \rightarrow U'$ such that the diagram:
  \begin{equation*}
  \begin{CD}
    B  @>s>> U\\
    @VV\phi_BV         @VV{\phi_U}V\\
    B'  @>s'>>   U'
  \end{CD}
  \end{equation*}
  commutes, i.e., such that $\phi_Us = s'\phi_B$.
\end{Def}

If $\id_A$ denotes the identity map on the set $A$, then for every system $B \ra U$, the pair $(\id_B,\id_U)$ is a morphism of systems.  Furthermore, morphisms may be composed component-wise to yield other morphisms.  Indeed, if $\bphi=(\phi_B,\phi_U):s \dotarrow s'$ and $\bphi'=(\phi'_B,\phi'_U):s' \dotarrow s''$ are morphisms, then the composition ${\bphi'\bphi}=(\phi'_B\phi_B,\phi'_U\phi_U): s \dotarrow s''$ is also a morphism.

Generally, given a diagram:
\begin{equation*}
  \begin{CD}
    B  @>s>> U\\
    @VV\phi_BV         @VV{\phi_U}V\\
    B'  @>s'>>   U'
  \end{CD}
  \end{equation*}
where $s$ and $s'$ are injective, it follows that $\phi_B$ is just the \emph{restriction} of $\phi_U$ onto $B$.  If either $s$ or $s'$ were not injective, then $\phi_B$ is not necessarily the restriction of $\phi_U$.  We cannot however always construct a commutative diagram by restricting an arbitrary $\phi_U$ onto $B$, as the image $\phi_U(B)$ may fall outside $B'$. We can then do so (if and) only if $\phi_U(B) \subseteq B'$.

The idea of a morphism, in a different form, appears in \cite{FUH2001} and \cite{FUH2002} through the notion of behavior homomorphism.  Behavior homomorphisms were partly introduced to assist in settling problems regarding equivalence of system representations. 

\subsubsection{Subsystems and controlled-systems}\label{subcon}

Introducing morphisms immediately introduces notions of a subsystem and a controlled-system.

Controlling the behavior of a system consists of restricting some of its potential outcomes.  As such if $(\U,\B)$ and $(\U,\B')$ are Willems systems with $\B \subseteq \B'$, then $\B$ is a controlled version of $\B'$.  Such a notion lifts naturally to the relative perspective.

\begin{Def}[Controlled-system]
  Let $\bphi : s_{ctrl} \dotarrow s$ be a morphism of systems. The pair $(s_{ctrl},\bphi)$ is said to be a controlled-system from $s$, if the components of $\bphi$ are injective maps. We may refer to $s_{ctrl}$ as the controlled-system if ${\bphi}$ is clear from the context. 
\end{Def}

\begin{Exa}
  The system underlying circuit (S$_c$) can be seen as a controlled-system from that of (S): 
\begin{center}
\begin{circuitikz}[scale=0.6]\draw
  (-0.5,1.5) node[]{(S$_c$)}
  (0,3) node[label=a]{} to[R, o-o] (4,3) node[label={c=d}]{}
  (4,0) node[label=below:b]{} to [short, o-o] (4,3); 
\end{circuitikz}
\end{center}
Let $s_c: \B_{S_c} \ra \U_{S_c}$ and $s: \B_S \ra \U_S$ be the systems of (S$_c$) and (S), respectively.  Then $\U_{S_c}$ is the free $\bR$-vector space with basis $\{v_{a'}, v_{b'}, v_{c'}, i_{a'c'}, i_{b'c'}\}$.  The set $\B_{S_c}$ is the subset of $\U_{S_c}$ whose tuples satisfy the laws of the circuit. The morphism $\bphi: s_c \dotarrow s$ is defined uniquely such that $\phi_U$ sends $v_{a'}, v_{b'}, v_{c'}, i_{a'c'}, i_{b'c'}$ in the basis of $\U_{S_c}$ respectively to $v_{a}, v_{b}, v_{c} + v_{d}, i_{ac}, i_{bd}$. The pair $(s_c,\bphi)$ is then a controlled-system from $s$.
\end{Exa}

Dually, a notion of subsystem, in the behavioral approach, is partially hinted at from the theory of latent variables.  It can be generally thought that a subsystem of a big system consists of a projection of the big system onto only the variables of interest.  We arrive at the following observation:

\begin{Def}[Subsystem]
  Let ${\bphi} : s \dotarrow s_{sub}$ be a morphism of systems. The pair $(s_{sub},\bphi)$ is said to be a subsystem of $s$, if the components of $\bphi$ are surjective maps. We may refer to $s_{sub}$ as the subsystem if ${\bphi}$ is clear from the context.
\end{Def}

\begin{Exa}
  The system underlying the circuit (P$_s$) can be seen as a subsystem of that of (P): 
\begin{center}
\begin{circuitikz}[scale=0.6]\draw
  (-1.5,1.5) node[]{(P$_s$)}
  (0,3) node[label=g']{} to[R, o-o] (0,0) node[label=below:h']{};
\end{circuitikz}
\end{center}
Let $p_s: \B_{P_s} \ra \U_{P_s}$ and $p: \B_P \ra \U_P$ be the systems of (P$_s$) and (P), respectively. Then $\U_{P_s}$ is the free $\bR$-vector space with basis $\{v_{g'},v_{h'},i_{g'h'}\}$. The set $\B_{P_s}$ is the subset of $\U_{P_s}$ whose tuples satisfy the laws of the circuit. The morphism $\bphi: p \dotarrow p_s$ is defined uniquely such that $\phi_U$ sends $v_g, v_h, i_{gh}$ in the basis of $\U_P$ respectively to $v_{g'}, v_{h'}, i_{g'h'}$, and everything else remaining in the basis of $\U_P$ to $0$. The pair $(p_s,\bphi)$ is then a subsystem of $p$.
\end{Exa}

The use of these two notions will appear in the interconnection of systems.  Informally, two systems are interconnected by \emph{gluing} them along a common subsystem to yield a controlled-system.  This approach will embody the nature of variable sharing stressed at by the behavioral approach.

{\bf Remark:}  Although we may think of the controlled-systems and the subsystems as the domains or co-domains of the morphisms, the notion however is really embedded in the morphism.  Indeed, two different morphisms from $s$ to $s_{sub}$ with surjective components yield different subsystems.

{\bf Remark:} The prefix sub of subsystem typically alludes to a possibility of embedding.  It seems counterintuitive that surjective maps rather than injective maps are involved.  Similarly, controlled systems would advocate identifying parts of the system together as a means of control.  This hints that surjectve maps rather than injective maps are to be in play. Such an unease will be remedied in a future section, by simply \emph{reversing} the direction of the morphisms.

\subsubsection{Recovering the fixed point of view}

To recover a fixed point of view, we fix the codomains of our systems. We allow only systems of the form $B \ra U$ for a fixed set $U$, and allow only degenerate morphisms that only map $U$ identically to itself.  If $s$ and $s'$ are two systems, we then define a partial-order $s \leq s'$ if, and only if, $s$ factors through $s'$, i.e., $s = s'h$ for some map $h$.  Note that if $s = s'h$ and $s$ is injective, then $h$ is injective. If $s = s'h$ and both $s$ and $s'$ are injective, then $h$ is injective and unique. We then obtain a lattice isomorphic to the lattice of behavior of the universum $U$.  With a fixed universum, the system $B \ra U$ is a controlled-system from $B' \ra U$ if, and only if, $B \subseteq B'$.  The notion of subsystem, however, completely disappears.  It appears in hidden form through the theory of latent variables.


\subsection{Revisiting: latent variables}

In this subsection, we establish that a Willems system $(\U,\B)$ is a manifest system of $(\U\times\U_l,\B_f)$ with latent variables if, and only if, $\B \ra \U$ is a subsystem of $\B_f \ra \U\times\U_l$.
\begin{Pro}
  Let $s : B \ra U\times U_l$ be a system, and consider a surjective map $U \xra{\pi} U'$, then (up to isomorphism) there is a unique set $B'$ and a unique surjective map $p$ such that:
    \begin{equation*}
  \begin{CD}
    B  @>s>> U\\
    @VVpV         @VV{\pi}V\\
    B'  @>>>   U'
  \end{CD}
  \end{equation*}
is a morphism of systems, i.e. is commutative with $B'\ra U'$ injective. Furthermore, $B'$ is isomorphic to $\im(\pi s)$, the image set of $\pi s$.
\end{Pro}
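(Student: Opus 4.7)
The plan is to exhibit the image factorization of $\pi s$ in the category of sets, and then to argue that any commutative square of the prescribed form is forced to agree with it up to a unique isomorphism.

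For existence, I would set $B' := \im(\pi s) \subseteq U'$ and let $s' : B' \hookrightarrow U'$ be the inclusion, which is injective by construction. I would then define $p : B \to B'$ by $p(b) := \pi s(b)$, which is well-defined as a map into $B'$ precisely because $B' = \im(\pi s)$, and which is surjective for the same reason. Commutativity of the square amounts to $s' p = \pi s$, which is immediate from the definitions. Note that the hypothesis that $s$ be injective and that $\pi$ be surjective is not actually needed to build this square; they only play a role if one wants additional properties (e.g.\ the paper's earlier observation that in such a square $\phi_B$ is the restriction of $\phi_U$).

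For uniqueness up to isomorphism, suppose $(\tilde B', \tilde p, \tilde s')$ is another triple with $\tilde p : B \to \tilde B'$ surjective, $\tilde s' : \tilde B' \hookrightarrow U'$ injective, and $\tilde s' \tilde p = \pi s$. Since $\tilde p$ is surjective, $\tilde s'(\tilde B') = \tilde s'(\tilde p(B)) = \pi s(B) = B'$, so $\tilde s'$ corestricts to a surjection $\tilde B' \twoheadrightarrow B'$; since $\tilde s'$ is also injective, this corestriction is a bijection $\iota : \tilde B' \xrightarrow{\sim} B'$. A one-line diagram chase then gives $\iota \tilde p = p$ and $s' \iota = \tilde s'$, so the whole square is identified with the one built from the image, and $\iota$ is the unique such isomorphism because $\tilde p$ is epic. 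This also identifies $B'$ with $\im(\pi s)$, as claimed in the second sentence.

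I do not anticipate a genuine obstacle: the statement is essentially the uniqueness of the epi--mono (surjection--injection) factorization in $\mathbf{Set}$, packaged so that it can be reused when interpreting latent variables via subsystems. The only small care required is disentangling the notation in the statement, where $s$ is written as $B \to U \times U_l$ while the diagram is phrased in terms of an abstract codomain $U$ with a surjection $\pi$; the proof above applies verbatim to either reading, with $\pi$ taken to act on the codomain of $s$.
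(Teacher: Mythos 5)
Your proposal is correct and takes essentially the same route as the paper, namely the uniqueness of the surjection--injection (epi--mono) factorization of $\pi s$ in $\mathbf{Set}$. In fact your argument is the more complete of the two: the paper only notes that any two intermediate sets have the same cardinality and are abstractly isomorphic to $\im(\pi s)$, whereas you construct the canonical isomorphism $\iota$ explicitly and verify that it is compatible with both legs of the factorization and is unique because $\tilde p$ is epic.
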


\begin{proof}
 Every set map $f: A \ra B$ has a unique factorization $f = is$ where $s$ is surjective and $i$ is injective. Indeed, let $is$ and $i's'$ be two factorizations, then $is(A)$ and $i's'(A)$ have the same cardinality. Since $i$ and $i'$ are injective, then $s(A)$ and $s'(A)$ have the same cardinality, and so are isomorphic.  As sets, they are isomorphic to $\im(f)$.
 \end{proof}
 
Specifying latent variables amounts to specifying a projection map $\U$ onto $\U'$.  In particular, the surjective map $\U \times \U_l \ra \U$ that forgets the components of $\U_l$ and maps the components in $\U$ identically onto $\U$ automatically identifies the set $\U_l$ as the universum of latent variables.  This surjective map induces a unique subsystem by projecting the full behavior onto the manifest behavior.

\begin{Cor}
  Let $s:B \ra U$ be a system, and $\pi: U \times U_l \ra U$ be the projection onto the first coordinate $(u,u_l) \mapsto u$, then $\im(\pi s) = \{ b \in U : (b,l)\in B \text{ for some }l\}$. \hfill \qed
\end{Cor}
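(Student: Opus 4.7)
The plan is to reduce the claim to a direct unfolding of definitions, using that $s$ is injective. (Note: I read the statement with the understanding that $s : B \to U \times U_l$, which is the natural codomain here so that $\pi s$ is defined; under this reading the result is exactly what is needed to identify the surjective factor in the preceding proposition with the manifest behavior.)

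First, by definition of image, $\im(\pi s) = \{\pi(s(b)) : b \in B\}$. Because $s$ is injective, I would identify $B$ with its image $s(B) \subseteq U \times U_l$ so that each $b \in B$ corresponds uniquely to a pair $(u, l) \in U \times U_l$ lying in $s(B)$. Under this identification, which is the one implicit in the behavioral setup where $B$ sits inside $U \times U_l$ as the full behavior, we may write elements of $B$ as pairs $(u, l)$.

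Next, I would apply $\pi$: since $\pi(u,l) = u$ by definition, the set $\{\pi(s(b)) : b \in B\}$ coincides with $\{u \in U : (u, l) \in B \text{ for some } l \in U_l\}$. This is exactly the manifest behavior as defined in the latent variable definition, completing the proof.

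There is no real obstacle here; the whole content is the chase through the definition of image and the identification $B \cong s(B)$ afforded by injectivity. What the corollary accomplishes, together with the preceding proposition, is recognizing that the unique surjective factor produced by the canonical (surjective, injective) factorization of $\pi s$ is precisely the projection from the full behavior onto the manifest behavior, thereby realizing the manifest system as a subsystem in the morphism-theoretic sense introduced earlier.
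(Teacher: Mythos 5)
Your proof is correct and matches the paper's intent: the corollary is stated with no proof (it is presented as an immediate unfolding of the definition of image, together with the identification $B \cong s(B)$ afforded by injectivity), which is exactly the computation you carry out. You also rightly repair the typo in the statement by reading $s : B \ra U \times U_l$ so that $\pi s$ is defined, consistent with the preceding proposition.
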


Every Willems system with latent variables uniquely defines its manifest system as a subsystem.  However some subsystems cannot be realized as a manifest system of some Willems system of latent variables. The notion of subsystem then properly subsumes the notion of latent variables.

\begin{Exa}
 Referring back to the two-port blackbox abstraction of (P), recall that $\U_{two-port}$ is the free $\bR$-vector space with basis $\{v_{e'},v_{f'},v_{i'},v_{j'},i_{e'},i_{i'},i_{f'},i_{j'}\}$.  Define the map $\phi: \U_P \ra \U_{two-port}$ to be the projection that sends $v_{e}$,$v_{f}$,$v_{i}$,$v_{j}$,$i_{eg}$,$i_{gi}$, $i_{fh}$,$i_{hj}$ in the basis of $\U_P$ respectively to $v_{e'}$,$v_{f'}$,$v_{i'}$,$v_{j'}$,$i_{e'}$,$i_{i'}$,$i_{f'}$,$i_{j'}$, and everything else remaining in the basis of $\U_P$ to $0$.  The unique subsystem $\bphi$ induced by a $\phi_U$ component equal to $\phi$ has the system $\B_{two-port} \ra \U_{two-port}$ as a codomain.  It thus defines $(\B_{2-port} \ra \U_{2-port},\bphi)$ as a subsystem of $\B_P \ra \U_P$.
\end{Exa}

Subsystems will have their use in interconnection of systems.  We generally think of the manifest variables as the variables that we wish to model, or rather that are of interest.  We may then think of them as being the variables of interest when it comes to interconnecting two systems.  More precisely, we can think of them as being the variables that two systems will share.  As a pick of variable (i.e., a projection) directly induces a subsystem, we may think of interconnection as sharing a common subsystem. The problem is that two systems may not share a common non-trivial subsystem. There are many pairs of systems $s$ and $s'$, where if $\phi: s \dotarrow s_{sub}$ and $\phi: s' \dotarrow s_{sub}$ are morphisms such that $(s_{sub},\phi)$ and $(s_{sub},\phi')$ are subsystems of $s$ and $s'$, then $s_{sub}$ is the trivial identity map over the set with one element.  We can however relax the notion of subsystem to that of a quasi-subsystem.  Then every pair of systems would share a non-trivial quasi-subsystem in common.

\begin{Def}[Quasi-Subsystem]
Let ${\bphi} : s \dotarrow s_{qsub}$ be a morphism of systems.  The pair $(s_{qsub},\bphi)$ is said to be a quasi-subsystem of $s$, if the second component $\phi_U$ of $\bphi$ is surjective. We may refer to $s_{qsub}$ as the quasi-subsystem if ${\bf\phi}$ is clear from the context.
\end{Def}

\subsection{Revisiting: interconnection and variable sharing}

The behavioral approach encourages that systems be made from smaller pieces by identifying variables together.  We introduce a construct, termed \emph{category}, to aid in capturing the pattern of this idea. Our systems along with their morphisms will form a category.  The reason for introducing categories is that interconnection and variables sharing amounts only to an instantiation of a general construction known as \emph{pullback}.

\subsubsection{Interlude on categories}
A category can be simply viewed as a directed multi-graph, where the arcs can be composed associatively, and every node has a self-arc that produces no effect when composed.

\begin{Def}[Category]
  A category $\mathcal{\C}$ consists of:
\begin{itemize}\setlength\itemsep{0em}
  \item[i.] A class $\Obj(\mathcal{C})$ of objects.
  \item[ii.] A set $\Hom_{\mathcal{C}}(A,B)$ of morphisms for every ordered pair $(A,B)$ of objects.  A morphism $f$ in $\Hom_{\C}(A,B)$ is denoted by $f: A \ra B$. If $f: A \ra B$ is a morphism, we refer to $A$ and $B$ as the domain and codomain of $f$.
  \item[iii.] A composition map $\Hom(A,B)\times \Hom(B,C) \ra \Hom(A,C)$ for every ordered triple $(A,B,C)$ of objects. The composition of two morphisms $f:A \ra B$ and $g: B \ra C$ is denoted by either $g\circ f$ or $gf$.
    \item[vi.] An identity morphism $\id_A \in \Hom(A,A)$ for every object $A$.
\end{itemize}
This data is subject to two axioms:
  \begin{itemize} \setlength\itemsep{0em}
    \item[A.1.] For every $f:A \ra B$, $g: B \ra C$ and $h:C \ra D$, we have $(f\circ g) \circ h = f\circ(g\circ h)$.
    \item[A.2.] For every $f: A \ra B$, we have $f\circ \id_A = \id_B\circ f = f$.
  \end{itemize}
\end{Def}

The primordial example of a category is the category {\bf Set} where the objects are sets, and morphisms are set functions. Other typical examples may include the category of vector spaces (over a fixed field) with linear maps, the category of topological spaces with continuous maps, the category of groups with group homomorphisms.  On a different end, every partially-ordered set forms a category by declaring the elements of the set as the objects, and having $\Hom(A,B)$ contain exactly one morphism (and none otherwise) if, and only if, $A \leq B$ in the partial order.  Composition, in such a case, reflects the transitive property.  The presence of the identity arrow, reflects reflexivity.

\begin{Pro}
 The systems (i.e., the injective maps $B \ra U$) along with their morphisms (defined in subsection \ref{morphism}) form a category.
\end{Pro}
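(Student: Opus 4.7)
The plan is to verify the category axioms directly, since the definition of a category reduces to data (objects, hom-sets, identities, composition) plus two axioms (associativity and identity laws), and most of the ingredients have already been produced in the text preceding the statement. I take the objects of the purported category to be injective maps $s : B \ra U$ of sets, and for two systems $s : B \ra U$ and $s' : B' \ra U'$, I take $\Hom(s,s')$ to be the set of morphisms $\bphi = (\phi_B, \phi_U)$ in the sense of the earlier definition, i.e.\ pairs of set maps with $\phi_U s = s' \phi_B$. This is a set (not a proper class) because it is a subset of the set $\Hom_{\textbf{Set}}(B,B') \times \Hom_{\textbf{Set}}(U,U')$.

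Next I would write down the candidate identity and composition. For each system $s : B \ra U$, set $\id_s := (\id_B, \id_U)$; since $\id_U \circ s = s = s \circ \id_B$, this is a morphism, as already observed in subsection \ref{morphism}. For composition, given $\bphi = (\phi_B, \phi_U) : s \dotarrow s'$ and $\bphi' = (\phi'_B, \phi'_U) : s' \dotarrow s''$, set $\bphi' \bphi := (\phi'_B \phi_B, \phi'_U \phi_U)$. The commutativity check
\[
(\phi'_U \phi_U) s \;=\; \phi'_U (s' \phi_B) \;=\; (s'' \phi'_B) \phi_B \;=\; s'' (\phi'_B \phi_B)
\]
confirms that $\bphi'\bphi$ is indeed a morphism $s \dotarrow s''$; this too is already observed in the text.

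Finally I would verify the two axioms. Associativity of composition is inherited componentwise from the associativity of set-map composition: if $\bphi, \bphi', \bphi''$ are composable morphisms of systems, then on each component
\[
(\phi''_\bullet \phi'_\bullet)\phi_\bullet \;=\; \phi''_\bullet (\phi'_\bullet \phi_\bullet), \qquad \bullet \in \{B, U\},
\]
so $(\bphi'' \bphi')\bphi = \bphi''(\bphi' \bphi)$. The identity laws $\bphi \circ \id_s = \bphi = \id_{s'} \circ \bphi$ for $\bphi : s \dotarrow s'$ follow likewise from the identity laws for set maps applied componentwise.

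None of the steps present a real obstacle: the proof is a routine transport of the axioms of \textbf{Set} across the two coordinates. The one point deserving a sentence of care is the observation that $\Hom(s,s')$ is a genuine set, which follows because each component lives in a Hom-set of \textbf{Set}; otherwise the whole verification is mechanical.
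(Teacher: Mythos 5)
Your proof is correct and follows essentially the same route as the paper, which simply notes that the required conditions (identities and closure under componentwise composition) were already verified in subsection \ref{morphism}; you have merely spelled out the remaining mechanical checks (associativity, unit laws, and that $\Hom(s,s')$ is a set) that the paper leaves implicit.
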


\begin{proof}
  The conditions were already verified in section \ref{morphism}.
\end{proof}

  We denote by {\bf System} the category of systems whose objects are injective maps $B \ra U$ and morphims $\phi$ are defined in subsection \ref{morphism}.

\subsubsection{Back to interconnection}

We introduce a universal construction in general categories termed pullback (or fibered-product).  Once instantiated to our category {\bf System}, it directly recovers the notion of interconnection and variable sharing.

\begin{Def}[Pullback]
  Let $\C$ be a category, and let $f_1:A_1 \ra B$ and $f_2:A_2 \ra B$ be two morphisms in $\C$ with the same codomain.  The pullback of $(f_1,f_2)$ consists of a triple $(K, p_1,p_2)$ where $K$ is an object of $\C$ and $p_1: K \ra A_1$ and $p_2: K \ra A_2$ are morphisms such that $f_1p_1 = f_2p_2$ satisfying the following universal property: for every other triple $(H,q_1,q_2)$ such that $q_1f_1=q_2f_2$, there is a unique morphism $h: H \ra K$ for which the following diagram commutes:

\begin{equation*}
  \begin{tikzcd}
    H
    \arrow[drr, bend left, "q_1"]
    \arrow[ddr, bend right, "q_2"]
    \arrow[dr, dotted, "h" description] & & \\
    & K \arrow[r, "p_1"] \arrow[d, "p_2"]
    & A_1 \arrow[d, "f_1"] \\
    & A_2 \arrow[r, "f_2"] & B
  \end{tikzcd}.
\end{equation*}
  We refer to $K$ as the \emph{object of the pullback}, denoted by $A_1 \times_B A_2$.
\end{Def}

If $f_1: A_1\ra B$ and $f_2 : A_2 \ra B$ are set maps in ${\bf Set}$, then the pullback $(K,p_1,p_2)$ consists of $K = \{(a_1,a_2) \in A_1 \times A_2 : f_1a_1 = f_2a_2\}$. We consider only the object of the pullback in this paper, and generally forget about the remaining maps.  The definition of a pullback instantiates to {\bf System} as:

\begin{Pro}
  Let $s: B \ra U$, $s': B'\ra U'$ and $s_c :B_c \ra U_c$ be systems, and let $\bphi = (\phi_B,\phi_U): s \dotarrow s_c$ and $\bphi = (\phi'_B,\phi'_U): s' \dotarrow s_c$ be morphisms of systems.  Then $s \times_{s_c} s'$ is the system $B^* \ra U^*$ where $B^* = \{ (b,b') \in B\times B' : \phi_Bb = \phi'_Bb'\}$, $U^* = \{ (u,u') \in U\times U' : \phi_Uu = \phi'_Uu'\}$ and the set map $B^* \ra U^*$ is the restriction of the product map $s \times s': B \times B' \ra U \times U'$ to the domain $B^*$.
\end{Pro}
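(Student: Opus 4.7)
The plan is to construct the pullback by reducing everything to the pullback in $\mathbf{Set}$, which we already know how to describe, and then to verify that the construction respects the extra structure carried by $\mathbf{System}$.

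First, I would check that the proposed object $s^*: B^* \to U^*$ is actually a well-defined system, i.e., an injective set map. The key point is that $(b,b') \in B^*$ implies $(s(b), s'(b')) \in U^*$: indeed, $\phi_U s(b) = s_c \phi_B(b) = s_c \phi'_B(b') = \phi'_U s'(b')$ by the definition of $B^*$ and the fact that $\bphi, \bphi'$ are morphisms. Injectivity of $s^*$ follows because the product map $s \times s'$ is injective (as $s$ and $s'$ are) and $s^*$ is its restriction to $B^*$.

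Second, I would define the two projection morphisms $\mathbf{p}: s^* \dotarrow s$ and $\mathbf{p}': s^* \dotarrow s'$ by letting their components be the obvious set-theoretic projections $B^* \to B$, $U^* \to U$ and $B^* \to B'$, $U^* \to U'$. The commutativity of the two squares defining these as morphisms of systems follows immediately from the fact that $s^*$ is the restriction of $s \times s'$. The identity $\bphi \circ \mathbf{p} = \bphi' \circ \mathbf{p}'$ holds component-wise by the defining conditions of $B^*$ and $U^*$.

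Third, for the universal property, suppose $\bpsi: t \dotarrow s$ and $\bpsi': t \dotarrow s'$ are morphisms from some system $t: H \to V$ with $\bphi \circ \bpsi = \bphi' \circ \bpsi'$. I would invoke the universal property of the pullback in $\mathbf{Set}$ (already identified as the fibered product of set maps) separately on the $B$-component and the $U$-component, obtaining unique set maps $h_B: H \to B^*$, $x \mapsto (\psi_B(x), \psi'_B(x))$ and $h_U: V \to U^*$, $v \mapsto (\psi_U(v), \psi'_U(v))$. Then I would verify that $(h_B, h_U)$ forms a morphism of systems, i.e., that $s^* h_B = h_U t$; this reduces, coordinate by coordinate, to the morphism conditions $s \psi_B = \psi_U t$ and $s' \psi'_B = \psi'_U t$. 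Uniqueness is inherited from the uniqueness in $\mathbf{Set}$ applied to each component.

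The only mild obstacle is bookkeeping: one must avoid conflating the two uses of the pullback (on $B$-components and $U$-components) and make sure the maps obtained are compatible with the underlying system maps. Nothing here is deep, which is the point: the proposition is saying that the forgetful functor $\mathbf{System} \to \mathbf{Set} \times \mathbf{Set}$ creates this pullback, so the verification is almost mechanical once the candidate $(B^*, U^*)$ is written down.
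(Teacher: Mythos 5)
Your proposal is correct and follows essentially the same route as the paper, which also exhibits the pullback as the restriction of the product system and defers commutativity and the universal property to the component-wise case of $\mathbf{Set}$; you simply carry out in full the verifications the paper leaves as ``easily checked.'' The extra details you supply (well-definedness of $s^*$ via the morphism squares, injectivity by restriction, and the component-wise universal property) are exactly the right ones.
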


\begin{proof}
  Let $s^*$ be $B^* \ra U^*$, and $\bphi^* : s^* \dotarrow s \times s'$ be the canonical controlled-system.  The pullback is $(s^*, \pi_s\bphi^*, \pi_{s'}\bphi^*)$ where $\pi_s$ and $\pi_{s'}$ are the projections from $s \times s'$ to $s$ and $s'$ respectively. Commutativity of diagrams and the universal property can be easily checked, and follow from the case of ${\bf Set}$.
\end{proof}

We return to variable sharing. Let $s:B \ra U \times U_c$ and $s': B' \ra U' \times U_c$ be two systems, and suppose we want to share the variables given by the universum $U_c$.  We then have projections $p: U \times U_c \ra U_c$ and $p': U' \times U_c \ra U_c$.  We pick $(s_c,\bphi)$ and $(s_c,\bphi')$ to be two arbitrary quasi-subsystems of $s$ and $s'$ respectively, where $\phi_U$ and $\phi'_U$ are $p$ and $p'$ respectively.  Such quasi-subsystems always exist. For instance, we may trivially pick $s_c = \id_{U_c}$, $\bphi = (ps,p)$ and $\bphi' = (p's',p')$.  The interconnected system obtained by sharing the variables in $U_c$ is then given by the pullback of $\bphi$ and $\bphi'$.  Any pair of \emph{common} quasi-subsystems, whose $U$ components are $p$ and $p'$, yields the same interconnected system.

\begin{Cor}
 Let $(s_c,\bphi)$ and $(s_c,\bphi')$ be two quasi-subsystems of $s:B \ra U \times U_c$ and $s': B' \ra U' \times U_c$ respectively.  Suppose further that $s_c$ is a map $B_c \ra U_c$ and that $\phi_U$ and $\phi'_U$ are the projections $p: U \times U_c \ra U_c$ and $p': U' \times U_c \ra U_c$.  Then, the object of the pullback of $\bphi$ and $\bphi'$ represents the system $B \times_{B_c} B' \ra (U\times U_c) \times_{U_c} (U'\times U_c)$ where the universum corresponds to $U \times U_c \times U'$ and the behavior to $B \times U' \cap U \times B'$. \hfill \qed
\end{Cor}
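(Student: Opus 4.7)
The plan is to deduce the corollary by directly instantiating the preceding proposition, which already computes pullbacks in {\bf System} in full generality. That proposition gives the pullback object as the system $B^\ast \ra U^\ast$ where $B^\ast = \{(b,b') \in B \times B' : \phi_B b = \phi'_B b'\}$, $U^\ast = \{(x,x') \in (U \times U_c) \times (U' \times U_c) : \phi_U x = \phi'_U x'\}$, and the arrow is the restriction of $s \times s'$ to $B^\ast$.

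First I would identify the universum $U^\ast$. Since $\phi_U = p$ and $\phi'_U = p'$ are projections onto the common factor $U_c$, the defining equation $\phi_U x = \phi'_U x'$ simply equates the $U_c$-coordinates of $x$ and $x'$. The obvious map
\[
((u,u_c),(u',u_c)) \mapsto (u,u_c,u')
\]
then provides a bijection $U^\ast \cong U \times U_c \times U'$, realizing the pullback $(U \times U_c) \times_{U_c} (U' \times U_c)$ as $U \times U_c \times U'$ as asserted.

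Next I would rephrase the condition cutting out $B^\ast$. Commutativity of the defining square of $\bphi$ gives $s_c \phi_B = p s$, and similarly $s_c \phi'_B = p' s'$. Since $s_c$ is injective, the condition $\phi_B b = \phi'_B b'$ is equivalent to $ps(b) = p's'(b')$, i.e. to agreement of the $U_c$-coordinates of $s(b)$ and $s'(b')$. Viewing $B \subseteq U \times U_c$ and $B' \subseteq U' \times U_c$ via their injective structure maps, the image of $(b,b')$ under $s \times s'$, transported through the bijection above, is therefore precisely a triple $(u,u_c,u')$ with $(u,u_c) \in B$ and $(u',u_c) \in B'$.

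Finally I would reconcile this with the expression $B \times U' \cap U \times B'$: inside $U \times U_c \times U'$, the set $B \times U'$ consists of triples whose $(U,U_c)$-component lies in $B$, while $U \times B'$ consists of triples whose $(U',U_c)$-component lies in $B'$, and the intersection is exactly the subset produced above. There is no substantive obstacle; the only care needed is bookkeeping for the order of factors and checking that the restricted map $B^\ast \ra U^\ast$ remains injective, which is automatic since $s \times s'$ already is.
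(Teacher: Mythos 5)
Your proposal is correct and is essentially the argument the paper intends: the corollary is stated with no separate proof precisely because it is a direct instantiation of the preceding proposition computing pullbacks in {\bf System}, which is exactly what you carry out (including the one genuinely needed observation that injectivity of $s_c$ converts the condition $\phi_B b = \phi'_B b'$ into agreement of the $U_c$-coordinates, independently of the particular choice of quasi-subsystem).
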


\begin{Exa}
 We define $s_c$ to be $\id: \U_c \ra \U_c$ where $\U_c$ is the free $\bR$-vector space with basis $\{v_{c=e},v_{d=f}\}$. Let $(s_c,\bphi)$ and $(s_c,\bphi')$ be two quasi-subsystems of $s:\B_S \ra \U_S$ and $s': \B_P \ra \U_P$ respectively. The component $\phi_U$ is defined to send $v_c,v_d$ in the basis of $\U_S$ respectively to $v_{c=e},v_{d=f}$, and everything else remaining in the basis of $\U_S$ to $0$.  Similarly, the component $\phi'_U$ is defined to send $v_{e},v_{f}$ in the basis of $\U_P$ respectively to $v_{c=e},v_{d=f}$, and everything else remaining in the basis of $\U_P$ to $0$.  The system corresponding to the circuit:
\begin{center}
\begin{circuitikz}[scale=0.6]\draw
  (0,3) node[label=a]{} to[R, o-o] (4,3)
  (0,0) node[label=below:b]{} to [short, o-o] (4,0)
  (4,0) node[label=below:{d=f}]{} to [open, o-o] (4,3) node[label={c=e}]{} -- (8,3) node[label=i]{} to[open, o-o] (8,0) node[label=below:j]{} -- (4,0)
  (6,3) node[label=g]{} to[R, o-o] (6,0) node[label=below:h]{};
\end{circuitikz}
\end{center}
is then the object of the pullback of $\bphi$ and $\bphi'$. 
\end{Exa}

If $\{*\}$ denotes \emph{the} set with one element, then $1 = \{*\} \ra \{*\}$ is a system.  Furthermore, for every system $s$, the unique morphism $s \dotarrow 1$ is a subsystem. In case we pullback from $s \dotarrow 1$ and $s' \dotarrow 1$, we are indicating that $s$ and $s'$ are not sharing any variables. The system we recover from their interaction is then simply the two systems independently put together.  The system we recover from the pullback is just the product system $s \times s'$ corresponding to the product of the universum and the product of the behavior.

 {\bf Important remark}. Let $\bphi : s \dotarrow s_c$ and $\bphi : s' \dotarrow s_c$ be two morphisms, non-necessarily quasi-subsystems, and denote by $(K,\pi,\pi')$ the pullback of $\bphi$ and $\bphi'$.  In that case, $\pi$ and $\pi'$ are morphisms of systems, and we may form the morphism $\pi \times \pi': K \dotarrow s \times s'$. The pair $(K,\pi \times \pi')$ will always be a controlled-system.  Thus variable sharing then consists of pulling back along a common quasi-subsystem to yield a controlled-system on the product of the systems, i.e. on the separate systems simply put next to each other.

 {\bf Limits}. In case we wish to interconnect more than two systems at a time, we can generalize pullbacks to (projective) limit. Such a generalization will not be studied in this paper.

 \subsection{Revisiting: behavioral equations}

 Behavioral equations consists of two morphisms with the same domain and codomain.  The behavior they represent then corresponds to their equalizer. 

\begin{Def}[Equalizer]
  Let $\C$ be a category, and $f, g : A \ra B$ two morphisms. An equalizer of $f$ and $g$ is a pair $(E,e:E \ra A)$ such that $fe=ge$ and for every map $h: H \ra A$ such that $fh=gh$ there is a unique map $u : H \ra E$ such that $h = eu$.
\end{Def}
Every pair of maps between two sets have an equalizer.  It is the subset of $A$ that is sent to the same image in $B$ by both maps.  As such, the equalizer will always be injective.  We can immediately recognize that the behavior of the system is the equalizer to the behavioral equations.  Therefore:

\begin{Pro}
  The triple $(\U,\mathbb{E},f_1,f_2)$ is a \emph{behavioral equation representation} of $(\U,\B)$ if and only if $(\B, \B\ra\U)$ is the equalizer of $f_1$ and $f_2$. \hfill\qed
\end{Pro}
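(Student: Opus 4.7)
The plan is to unfold both sides of the biconditional to the same concrete set-theoretic description of $\B$ as a subset of $\U$, exploiting the fact that equalizers in $\mathbf{Set}$ admit an explicit description. Recall that by definition, $(\U,\mathbb{E},f_1,f_2)$ is a behavioral equation representation of $(\U,\B)$ precisely when $\B = \{u\in\U : f_1(u)=f_2(u)\}$ and the map $\B\ra\U$ is the inclusion. On the other hand, the equalizer of a parallel pair in $\mathbf{Set}$ always exists and is given concretely (up to unique isomorphism) by the same subset $E=\{u\in\U : f_1(u)=f_2(u)\}$ together with its inclusion $e : E\hra\U$. So both sides of the biconditional admit the same canonical representative, and the proof amounts to a verification.

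For the forward direction, I would assume the behavioral equation representation and directly check the universal property of the equalizer. The equation $f_1\circ i = f_2\circ i$ on the inclusion $i:\B\ra\U$ is immediate from the definition of $\B$. Given any test map $h:H\ra\U$ with $f_1h=f_2h$, the image $h(H)$ lies in $\B$ by definition, so the corestriction $u:H\ra\B$ with $iu=h$ exists. Uniqueness of $u$ follows at once from the injectivity of $i$, which is guaranteed because systems are by assumption injective maps.

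For the converse, I would suppose $(\B,\B\ra\U)$ is an equalizer of $f_1,f_2$ and compare it with the canonical equalizer $(E,e)$ just described. The universal property produces a unique isomorphism $\B\xra{\sim} E$ commuting with the two maps into $\U$. Since both the given map $\B\ra\U$ and $e:E\hra\U$ are injective and agree through this isomorphism, the image of $\B\ra\U$ in $\U$ coincides with $E=\{u\in\U : f_1(u)=f_2(u)\}$. Identifying $\B$ with its image under the injection (as in the fixed-codomain viewpoint of Section 3.1.2), this is exactly the statement that $(\U,\mathbb{E},f_1,f_2)$ is a behavioral equation representation of $(\U,\B)$.

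There is essentially no hard step here, only a minor bookkeeping subtlety: the phrase ``$(\B,\B\ra\U)$ is the equalizer'' must be read up to the canonical isomorphism of equalizers, and the identification of $\B$ as a subset of $\U$ with its image under the injective map $\B\ra\U$ has to be made explicit. Once this identification is in place, the two conditions are literally the same subset of $\U$ equipped with its inclusion, and the equivalence is transparent.
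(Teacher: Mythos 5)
Your proposal is correct and matches the paper's reasoning: the paper gives no explicit proof (the proposition is stated with a \qed, justified only by the preceding remark that the equalizer in $\mathbf{Set}$ is concretely the subset $\{u\in\U: f_1(u)=f_2(u)\}$ with its inclusion), and your argument simply fills in the routine verification of the universal property and the up-to-isomorphism bookkeeping. Nothing to change.
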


However, more structure can be harvested. Let us define a category {\bf Equation} of behavioral equation representations.  The objects of {\bf Equation} are pairs $(f_1,f_2)$ of set maps $U \ra E$ for every pair of sets $U$ and $E$. A morphism $\bpsi$ from $(f_1,f_2):U \ra E$ to $(g_1,g_2):U' \ra E'$ is also a pair of maps $(\psi_U,\psi_E)$ with the commutativity properties $\psi_E f_1 =  g_1\psi_U$ and $\psi_E f_2 =  g_2\psi_U$.

One should think of {\bf Equation} as a category of syntax, i.e. of descriptions of systems.  This parallels {\bf System} which acts as the category of semantics.  Pullbacks in the category {\bf Equation} instantiates to constructing our systems syntactically.  Indeed, pullbacks in {\bf Equation} consist of stacking the equations together while making sure that common variables are well identified and taken care of.  We will not flush out the details of this. The two categories are related through the equalizer rule.  We make this precise by introducing the notion of a functor.

\subsubsection{Interlude on functors}

A functor is a mapping that preserves the structure of a category. If categories are viewed as directed multi-graphs, then functors are graph homomorphisms that preserve composition and identity.

\begin{Def}
 A functor $F: \mathcal{C} \ra \mathcal{D}$ from a category $\mathcal{C}$ to a category $\mathcal{D}$ is a rule that assigns an object $F(C)$ of $\mathcal{D}$ to every object $C$ of $\mathcal{C}$, and a morphism $F(f):F(C_1) \ra F(C_2)$ in $\mathcal{D}$ to every morphsim $f:C_1 \ra C_2$ in $\mathcal{C}$.  We require a functor to preserve the identity morphisms, i.e.,  $F(\id_C) = \id_{F(C)}$ and composition, i.e., $F(f\circ g) = F(f) \circ F(g)$.
\end{Def}

Consider, as an example, the power-set functor that sends a set $A$ to $2^A$ its set of subsets. It also sends functions $A \ra B$ to functions $2^A \ra 2^B$ by mapping a subset of $A$ to its image set. It forms a functor from {\bf Set} to {\bf Set}.  On a different end, every order-preserving map between partially-ordered sets is a functor on the induced categories.

\subsubsection{Back to the behavioral equations}

Let {\bf Arr{-}Eq} be the map that sends an object in $\Obj(${\bf Equation}$)$ to the morphism component in its equalizer in $\Obj(${\bf System}$)$.

\begin{Pro}
 The map {\bf Arr{-}Eq} lifts to a functor ${\rm\bf Equation} \ra {\rm\bf System}$.
\end{Pro}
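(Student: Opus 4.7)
The plan is to specify what $\mathbf{Arr\text{-}Eq}$ does on morphisms and then check the two functoriality axioms, with the definition forced by the universal property of the equalizer.

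Given a morphism $\bpsi = (\psi_U,\psi_E): (f_1,f_2) \dotarrow (g_1,g_2)$ in $\mathbf{Equation}$, let $e:B\hra U$ and $e':B'\hra U'$ denote the equalizer inclusions (so $\mathbf{Arr\text{-}Eq}$ sends these objects to $e$ and $e'$ respectively). The first step is to observe that $\psi_U\circ e: B\to U'$ equalizes $g_1$ and $g_2$. Indeed, using $\psi_E f_i = g_i \psi_U$ for $i=1,2$ and the fact that $f_1 e = f_2 e$, we get
\[
g_1(\psi_U e) \;=\; \psi_E f_1 e \;=\; \psi_E f_2 e \;=\; g_2(\psi_U e).
\]
By the universal property of the equalizer $e':B'\hra U'$, there is a unique map $\phi_B:B\to B'$ with $e'\phi_B = \psi_U e$. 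I would then \emph{define} $\mathbf{Arr\text{-}Eq}(\bpsi) := (\phi_B,\psi_U)$; by construction the relevant square commutes, so this is a genuine morphism in $\mathbf{System}$.

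Next I would verify the two functor axioms. For identities: when $\bpsi = (\id_U,\id_E)$, the map $\psi_U\circ e = e$ equalizes $f_1$ and $f_2$, and by uniqueness in the universal property of $e$ the induced map $\phi_B$ must be $\id_B$, giving $\mathbf{Arr\text{-}Eq}(\id) = \id$. For composition: given $\bpsi:(f_1,f_2)\dotarrow(g_1,g_2)$ and $\bpsi':(g_1,g_2)\dotarrow(h_1,h_2)$ with induced maps $\phi_B,\phi'_B$ on equalizers $e,e',e''$, both $\phi'_B\phi_B$ and the map induced by $\bpsi'\bpsi$ satisfy $e''\circ(-) = \psi'_U\psi_U\circ e$; uniqueness again forces them to agree, so $\mathbf{Arr\text{-}Eq}(\bpsi'\bpsi) = \mathbf{Arr\text{-}Eq}(\bpsi')\,\mathbf{Arr\text{-}Eq}(\bpsi)$.

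There is no real obstacle here: the whole proposition is a standard consequence of the universal property, and the only substantive content is the computation $g_i\psi_U e = \psi_E f_i e$ showing that $\psi_U$ restricts to the equalizers. The mildest care needed is to emphasize that everything on the $U$-component of the morphism in $\mathbf{System}$ is literally $\psi_U$ (no modification), while the $B$-component is the restriction of $\psi_U$ to $B$, which is well-defined because $\psi_U(B)\subseteq B'$ by the computation above; this is consistent with the earlier remark in Section~\ref{morphism} that in a commutative square between injective systems, the top map is uniquely determined as the restriction of the bottom map.
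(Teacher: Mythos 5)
Your proof is correct and follows essentially the same route as the paper: the paper simply declares $\mathbf{Arr\text{-}Eq}(\psi_U,\psi_E)$ to be the unique morphism of systems with $U$-component $\psi_U$, which is exactly the map you construct. Your version just makes explicit what the paper leaves implicit, namely the computation $g_i\psi_U e = \psi_E f_i e$ showing that $\psi_U$ restricts to the equalizers, and the uniqueness arguments giving preservation of identities and composition.
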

\begin{proof} If $(\psi_U,\psi_E):(f_1,f_2) \dotarrow (g_1,g_2)$ is a morphism in ${\rm\bf Equation}$, we let ${\rm\bf Arr{-}Eq}(\psi_U,\psi_E)$ be the unique morphism $\bphi : {\rm\bf Arr{-}Eq}(f_1,f_2) \dotarrow {\rm\bf Arr{-}Eq}(g_1,g_2)$ in ${\rm\bf System}$ such that $\phi_U = \psi_U$.
\end{proof}

The crucial property that connects syntax and semantics well is that ${\rm\bf Arr{-}Eq}$ preserves pullbacks. The semantics reflect interconnection in the syntax. 

\begin{Pro}
 Let $\bpsi : e \dotarrow e_c$ and $\bpsi' : e' \dotarrow e_c$ be morphisms in {\bf Equation}, then $\text{\bf Arr{-}Eq}$ takes the object of the pullback of $(\bpsi,\bpsi')$ to the object of the pullback of $(\text{\bf Arr{-}Eq}\bpsi,\text{\bf Arr{-}Eq}\bpsi')$. 
\end{Pro}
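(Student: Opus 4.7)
The plan is to give an explicit concrete description of both sides of the claimed equality and verify they coincide. The key observation is that pullbacks in {\bf Equation} can be computed componentwise in {\bf Set}, and the equalizer construction is then compatible with that componentwise pullback.

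First, I would describe the pullback in {\bf Equation}. Writing $e = (f_1, f_2): U \ra E$, $e' = (f'_1, f'_2): U' \ra E'$, $e_c = (f^c_1, f^c_2): U_c \ra E_c$, and morphisms $\bpsi = (\psi_U, \psi_E)$, $\bpsi' = (\psi'_U, \psi'_E)$, the pullback object should be the pair $(g_1, g_2): U^* \ra E^*$ where $U^* = U \times_{U_c} U'$, $E^* = E \times_{E_c} E'$, and $g_i(u, u') = (f_i(u), f'_i(u'))$ for $i=1,2$. I would verify first that this is well-defined (the commutativity $\psi_E f_i = f^c_i \psi_U$ ensures $(f_i(u), f'_i(u'))$ lies in $E^*$ whenever $(u,u')$ lies in $U^*$), and then that the universal property follows from the componentwise universal properties in {\bf Set}, which is essentially routine.

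Next I would apply $\text{\bf Arr-Eq}$ to this pullback. The equalizer is
\[
\{(u,u') \in U^* : g_1(u,u') = g_2(u,u')\} = \{(u,u') : f_1(u) = f_2(u),\; f'_1(u') = f'_2(u'),\; \psi_U(u) = \psi'_U(u')\},
\]
embedded into $U^*$ as the obvious system. On the other hand, by the earlier description of pullbacks in {\bf System}, the pullback of $\text{\bf Arr-Eq}\,\bpsi$ and $\text{\bf Arr-Eq}\,\bpsi'$ has universum $\{(u,u') \in U \times U' : \psi_U(u) = \psi'_U(u')\} = U^*$ and behavior $\{(b,b') \in \B \times \B' : \psi_U(b) = \psi'_U(b')\}$, where $\B, \B'$ are the equalizers of $(f_1,f_2)$ and $(f'_1,f'_2)$. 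Unwinding definitions, this behavior is exactly the set displayed above, and the inclusion into $U^*$ is the same map. Hence the two systems coincide on the nose.

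The only nontrivial point is really bookkeeping: checking that under $\text{\bf Arr-Eq}$ the universum component $\psi_U$ of a morphism in {\bf Equation} is taken to the universum component of a morphism in {\bf System} in a way that is consistent with both pullback constructions. This is immediate from the construction in the preceding proposition ($\text{\bf Arr-Eq}(\psi_U, \psi_E)$ is defined so that its $\phi_U$ component is $\psi_U$). No deeper obstacle arises; at a higher level this is just an instance of the fact that limits commute with limits, since both pullbacks and equalizers are limits, but giving the explicit side-by-side identification makes the statement transparent and matches the concrete flavor of the paper.
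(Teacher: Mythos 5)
Your proposal is correct, and it establishes the same fact as the paper, but by a more explicit route. The paper's own proof is three sentences long: it cites the general principle that equalizers preserve limits (limits commute with limits) to conclude that the domain $B^*$ of ${\rm\bf Arr{-}Eq}$ applied to the pullback object coincides with the domain of the object of the pullback of $({\rm\bf Arr{-}Eq}\bpsi,{\rm\bf Arr{-}Eq}\bpsi')$, observes that the codomains trivially coincide, and notes that the connecting map $B^*\ra U^*$ is then forced by commutativity. You instead unwind that principle concretely: you compute the pullback in {\bf Equation} componentwise ($U^*=U\times_{U_c}U'$, $E^*=E\times_{E_c}E'$, with the induced parallel pair), take its equalizer explicitly, compute the pullback in {\bf System} of the two equalizer systems using the paper's earlier description of pullbacks there, and check that the two subsets of $U\times U'$ agree element by element. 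Both arguments are sound; yours supplies the verifications that the paper compresses into the phrase ``equalizers preserve limits'' --- namely that limits in the diagram category {\bf Equation} are computed pointwise, and that the behavior component of ${\rm\bf Arr{-}Eq}\bpsi$ is the restriction of $\psi_U$, so the matching conditions on the two sides really are the same condition --- while the paper's version makes clear that the statement is an instance of a completely general categorical fact and therefore survives a change of ambient category (e.g., to modules or to the generalized systems of the appendix). Since you acknowledge the abstract reading yourself in your closing remark, the two proofs are best seen as the same argument viewed at two levels of resolution.
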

\begin{proof} Let $(e^*,\bpsi^*,\bpsi'^*)$ be the pullback of $(\bpsi,\bpsi')$.  As equalizers preserve limits, and in particular pullbacks, the domain $B^*$ of ${\rm\bf Arr{-}Eq}e^*$ coincides with the domain of the object of the pullback of $({\rm\bf Arr{-}Eq}\bpsi,{\rm\bf Arr{-}Eq}\bpsi')$. The codomains trivially coincide, and are denoted by $U^*$. Finally there is a unique set map $B^* \ra U^*$ ensuring that the emerging diagram commutes.
\end{proof}
We first illustrate with a clear example.

\begin{Exa}
  We consider two resistors (R) and (R'), and define a syntactical interconnection, on the level of behavioral equation representations.
\begin{center}
  \begin{circuitikz}[scale=0.6]\draw
 (-1,0) node[]{(R)}
 (7,0) node[]{(R')}
 (0,0) node[label=a]{} to[R, l=R, o-o] (4,0) node[label=b]{}
 (8,0) node[label=c]{} to[R, l=R', o-o] (12,0) node[label=d]{};
\end{circuitikz}
\end{center}
The universa $\U_{R}$ and $\U_{R'}$ are the free $\bR$-vector spaces with basis $\{v_a,v_b,i_{ab}\}$ and $\{v_c,v_d,i_{cd}\}$, respectively. A behavioral equation representation of (R) consists of a pair $(f_R,0)$ where $f_R : \U_R \ra \bR$ sends $v_a,v_b,i_{ab}$ to $-1,1,R$ respectively.  Likewise, a behavioral equation representation of (R') consists of a pair $(f_{R'},0)$ where $f_{R'} : \U_{R'} \ra \bR$ sends $v_c,v_d,i_{cd}$ to $-1,1,R'$ respectively.  Let $\bR^2$ be the free $\bR$-vector space with basis $\{v,i\}$ and let $0_2 : \bR^2 \ra 0$ be the zero map.  We can uniquely define morphisms $\bpsi : (f_{R},0) \dotarrow (0_2,0_2)$  and $\bpsi' : (f_{R'},0) \dotarrow (0_2,0_2)$ such that $\psi_{U}:\U_R \ra \bR^2$ sends $v_a,v_b,i_{ab}$ to $0,v,i$ and $\psi'_{U} : \U_{R'} \ra \bR^2$ sends $v_c,v_d,i_{ab}$ to $v,0,i$.  Pulling back along $\bpsi$ and $\bpsi'$ yields as object the pair $(f,0): \bR^4 \ra \bR^2$ which is a behavioral equation representation for the series circuit once (R) and (R') are interconnected at $b$ and $c$.
\end{Exa}

We may now link to the running example:

\begin{Exa}
  Let us suppose that $\bR^2$ has basis $\{e_1,e_2\}$. A behavioral equation representation of (S) can be identified with a linear map $f_S:\U_S \ra \bR^2$, that sends $v_a,v_b,v_c,v_d,i_{ac},i_{bd}$ to $-e_1,-e_2,e_1,e_2,-Re_1,0$.  Clearly, the kernel of $f_S$ gives $\B_S$. Let $f_c$ be the identity map $\id_2: \bR^2 \ra \bR^2$.  We may now construct the unique morphism $\bpsi : (f_S,0) \ra (\id_2,0)$ such that $\psi_E$ is the identity map. A behavioral representation of (P) can be identified with a linear map $f_P: \U_P \ra \bR^9$, and a canonical morphism $\bpsi' : (f_P,0) \ra (\id_2,0)$ may be set up.  The object of the pullback along $\bpsi$ and $\bpsi'$ yields a behavioral equation representation of the interconnection of (S) and (T) by identifying $c$ with $e$ and $d$ with $f$.
\end{Exa}

{\bf Important Remark:}  It is very desirable that each system possesses a simplest behavioral equation representation of it.  The presented syntax category along with the {\bf Arr{-}Eq} functor will not allow such a thing. Every system will have many behavioral equations attached to it, but no \emph{universal} one.  %
We can however get a much better category and functor by working with generalized systems, where we drop the injectivity condition on the maps $B \ra U$.  Each generalized system then has a \emph{simplest} description to it. 
The new category preserves all the features described in this section, but consists of a more involved construction. To preserve flow of the paper, we present the desired category in an appendix section on generalized systems.

\section{The big picture}

We constructed two categories.  A category of syntax {\bf Equation}, reflecting syntactical descriptions of systems through behavioral equations, and a category of semantics {\bf System}, reflecting the objects behind the equations, the behaviors.  Those two categories are linked by an interpretation functor ${\rm\bf Arr{-}Eq}$ that sends each description to the underlying system it described. Interconnection of systems in {\bf System} consist of taking pullbacks.  We uncovered a notion of (quasi-)subsystem, and a notion of controlled-system.  Indeed, variable sharing consists of pulling back along a shared quasi-subsystem to get a controlled-system.  Syntactical constructions in {\bf Equation} also consist of taking pullbacks. Finally, the interpretation functor ${\rm\bf Arr{-}Eq}$ preserves pullbacks, thus translating syntactical interconnection to semantical interconnection.

Throughout the development, sub-systems were identified with surjective maps, and controlled-systems with injective maps.  Such an association might seem unnatural as we already described in subsection \ref{subcon}.  Indeed, the prefix sub of subsystem typically alludes to a part of bigger system, namely a possibility of embedding. It then seems counter-intuitive that surjective maps rather than injective maps are involved.  Also, controlled systems lead us to think of identifying parts of the system as a means of control.  This identification is intuively perceived through surjective maps.  To recover a more intuitive association, we then only need to flip the direction of the morphisms.  Such a flip is discussed in the next sub-section.

\subsection{The opposite approach}

Rather than thinking of a Willems system as a pair $(\U,\B)$ with $\B \subseteq \U$, we will think of it as a pair $(2^\U,2^\B)$, where $2^S$ denotes the set of subsets of a set $S$. Then, from the relative point of view, a system is no longer an injective map $B \ra U$, but a surjective map $2^U \ra 2^B$ that sends $S\in 2^U$ to $S\cap B$.  Such a consideration might seem absurd, but it is equally valid and may appear to be more natural.  Subsystems are now associated with injective morphisms, and controlled-systems are now associated with surjective morphisms. Overall, this transformation only reverses the \emph{direction} of the morphisms involved.

Let us define {\bf Bool} to be the category whose objects are complete atomic boolean lattices, and whose morphisms are the lattice homomorphisms. Every complete atomic boolean lattice is isomorphic to $2^S$ for some set $S$.  We may then just consider the objects as lattices of the form $2^S$ for some set $S$.  The categories {\bf Set} and {\bf Bool} may be related through two functors.
\begin{itemize}

\item[i.] {\bf From Set to Bool}. If $f:S \ra T$ is a set map,  then $f^{-1}: 2^T \ra 2^S$ is a lattice homomorphism. We denote by $F : {\rm\bf Set} \ra {\rm\bf Bool}$ the functor that sends $S$ to $2^S$ and $f:S \ra T$ to $f^{-1}: 2^T \ra 2^S$.

\item[ii.] {\bf From Bool to Set}. If $\phi: 2^T \ra 2^S$ is a lattice homomorphism, then there exists a unique map $G\phi : S \ra T$ such that $(G\phi)^{-1} = \phi$. Indeed, as $\phi \{t\} \cap \phi \{t'\} = \emptyset$ for every $t\neq t'$ in $T$, every $s$ belongs to a set $\phi\{t\}$ for a unique $t$. If the map $G\phi$ sends every $s \in \phi\{t\}$ to $t$, then $(G\phi)^{-1} = \phi$.  We denote by $G : {\rm\bf Bool} \ra {\rm\bf Set}$ the functor that sends $2^T$ to $T$ and $\phi$ to $G\phi$ as just described.
\end{itemize}
We make the relation between {\bf Bool} and {\bf Set} precise by defining the notion of an opposite category.
\begin{Def}[Opposite Categories]
 The opposite category $\mathcal{C}^{op}$ of $\mathcal{C}$ is the category containing the objects of $C$, and a morphism $f^{op}: B \ra A$ for every $f : A \ra B$ in $C$. If $f^{op} : B \ra A$ and $g^{op} : C \ra B$ are in $\mathcal{C}^{op}$, then $f^{op}\circ g^{op} = (g \circ f)^{op}$.
\end{Def}

The notion of opposite categories, in general, allows us to dualize both definitions and results.

\begin{Def}[Pushouts]
The pushout of $f: A \ra B$ and $g: A \ra C$ in $\C$ is the pullback of $f^{op}$ and $g^{op}$ in $\mathcal{C}^{op}$.
\end{Def}

\begin{Def}[Coequalizers]
  The coequalizer of $f_1: A \ra B$ and $f_2: A \ra B$ in $\C$ is the equalizer of $f^{op}$ and $g^{op}$ in $\mathcal{C}^{op}$.
\end{Def}  

The key fact is that {\bf Bool} is \emph{equivalent} {\bf Set}$^{op}$. Such a fact lifts to our category of systems.  We shall not formalize the term \emph{equivalent}, but rather describe what it entails for us. For all purposes, working in {\bf Bool} is the same as working in {\bf Set}$^{op}$.  The functors $F$ and $G$ give us the mean to flip the arrows.  The equivalence manifests itself first in:

\begin{Pro}\label{injectivesurjective}
  A map $f$ is injective (resp. surjective) in {\bf Set} if, and only if, $Ff = f^{-1}$ is surjective (resp. injective) in {\bf Bool}.
\end{Pro}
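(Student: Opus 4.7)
The plan is to prove the two equivalences separately by working directly with the definitions. Throughout, let $f : S \to T$ be a set map, so $Ff = f^{-1} : 2^T \to 2^S$ sends $B \subseteq T$ to $\{s \in S : f(s) \in B\}$.

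First I would establish ``$f$ injective $\iff$ $f^{-1}$ surjective''. For the forward direction, given any $A \in 2^S$, I claim $B := f(A)$ satisfies $f^{-1}(B) = A$. The inclusion $A \subseteq f^{-1}(f(A))$ is automatic; the reverse inclusion uses injectivity of $f$, since $s \in f^{-1}(f(A))$ forces $f(s) = f(a)$ for some $a \in A$, and injectivity then gives $s = a \in A$. For the converse, I would argue contrapositively: if $f$ is not injective, pick $s_1 \neq s_2$ with $f(s_1) = f(s_2)$. Then the singleton $\{s_1\} \in 2^S$ is not in the image of $f^{-1}$, because any $B \subseteq T$ with $s_1 \in f^{-1}(B)$ contains $f(s_1) = f(s_2)$, and hence also contains $s_2$ in $f^{-1}(B)$.

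Next I would establish ``$f$ surjective $\iff$ $f^{-1}$ injective''. If $f$ is surjective and $f^{-1}(A) = f^{-1}(B)$ for $A, B \subseteq T$, then for any $t \in A$, surjectivity yields $s \in S$ with $f(s) = t$, hence $s \in f^{-1}(A) = f^{-1}(B)$, so $t = f(s) \in B$; symmetry gives $A = B$. Conversely, if $f$ is not surjective, pick $t \in T \setminus f(S)$; then $f^{-1}(\{t\}) = \emptyset = f^{-1}(\emptyset)$ while $\{t\} \neq \emptyset$, violating injectivity of $f^{-1}$.

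No step here is a real obstacle: both equivalences reduce to the elementary identities $f^{-1}(f(A)) = A$ for injective $f$ and $f(f^{-1}(B)) = B$ for surjective $f$, together with simple contrapositives. The only mild subtlety is keeping careful track of which direction one is proving, since injective/surjective swap roles across $F$; I would present the two equivalences in separate paragraphs to avoid conflating them.
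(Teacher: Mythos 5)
Your proof is correct and complete; the paper itself disposes of this proposition with the single line ``the result follows from the definition of $F$,'' and your argument is exactly the elementary verification being alluded to. Both directions of each equivalence check out, including the contrapositive arguments via the singleton $\{s_1\}$ and the empty preimage of $\{t\}$.
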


\begin{proof} The result follows from the definition of $F$.
\end{proof}

We can then define a category {\bf Bool-System} to have, as objects, surjective boolean lattice homomorphisms $2^U \ra 2^B$.  A morphism $\bpsi$ in {\bf Bool-System} is then a pair of homomorphisms $(\psi_U, \psi_B)$ such that the diagram:
  \begin{equation*}
  \begin{CD}
    2^U  @>h>> 2^B\\
    @VV\psi_UV         @VV{\psi_B}V\\
    2^{U'}  @>{h'}>>   2^{B'}
  \end{CD}
  \end{equation*}
  commutes, i.e., such that $h'\psi_U = \psi_Bh$.

  Building on proposition \ref{injectivesurjective}, one can check that $h$ (resp. $Fs$) is an object of {\bf Bool-System} if, and only if, $Gh$ (resp. $s$) is an object of {\bf System}. A morphsim $\bpsi = (\psi_U,\psi_B)$ is a morphim of {\bf Bool-System} if, and only if,  $G\bpsi = (G\psi_B,G\psi_U)$ is a morphism of {\bf System}. Most importantly:
  
\begin{Pro}
 The bool-system $(h,i,i')$ is the pushout of $\bpsi$ and $\bpsi'$ if, and only if, $(Gh,Gi,Gi')$ is the pullback of $G\bpsi$ and $G\bpsi'$.
\end{Pro}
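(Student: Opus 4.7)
The proof plan is to leverage the contravariant equivalence between \textbf{Bool} and \textbf{Set} induced by $F$ and $G$, and to lift it to an equivalence between \textbf{Bool-System} and \textbf{System}$^{op}$. Under such an equivalence, colimits on one side correspond to limits on the other, so pushouts in \textbf{Bool-System} are exactly pullbacks in \textbf{System}. Thus the plan is not to re-prove a general categorical fact, but to make the bookkeeping concrete: show that $G$, applied componentwise to the pushout data, produces precisely the pullback data, and that the universal property transfers.

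Concretely, first I would unpack the shapes. If $(h,i,i')$ is a pushout of $\bpsi$ and $\bpsi'$ in \textbf{Bool-System}, then $i$ and $i'$ share the codomain $h$ with $i\bpsi = i'\bpsi'$. Applying $G$, which reverses arrows, produces $(Gh, Gi, Gi')$ with $Gi$ and $Gi'$ sharing the \emph{domain} $Gh$ and $G\bpsi \cdot Gi = G\bpsi' \cdot Gi'$, which is exactly the shape of a pullback square for $G\bpsi, G\bpsi'$ in \textbf{System}. The remark preceding the proposition, together with Proposition \ref{injectivesurjective}, guarantees that these images live in \textbf{System} (i.e., $Gh$ is an injective map and $Gi, Gi', G\bpsi, G\bpsi'$ are well-defined morphisms of systems). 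Then I would verify the universal property: given any system $s$ with morphisms $\bm{\alpha}: s \dotarrow Ge$ and $\bm{\alpha}': s \dotarrow Ge'$ satisfying $G\bpsi \cdot \bm{\alpha} = G\bpsi' \cdot \bm{\alpha}'$, apply $F$ componentwise to obtain morphisms $F\bm{\alpha}: Fe \dotarrow Fs$ and $F\bm{\alpha}': Fe' \dotarrow Fs$ in \textbf{Bool-System} with $F\bm{\alpha} \cdot \bpsi = F\bm{\alpha}' \cdot \bpsi'$. The pushout property of $(h,i,i')$ yields a unique $\bphi: h \dotarrow Fs$ in \textbf{Bool-System}; applying $G$ and using $GF \cong \id$ produces the required unique mediating morphism $Gh \dotarrow s$ in \textbf{System}. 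The converse direction is symmetric, interchanging the roles of $F$ and $G$.

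The main obstacle I expect is not the universal property juggling, which is essentially automatic once the shapes are aligned, but the careful verification that $F$ and $G$ truly lift to mutually inverse contravariant functors between \textbf{Bool-System} and \textbf{System}, compatible with the two-component morphism structure $(\psi_U, \psi_B)$. In particular one must check that the commuting square defining a morphism of \textbf{Bool-System} is sent by $G$ to the commuting square defining a morphism of \textbf{System}, and that the compositional structure (and hence the compatibility equations used above) is preserved. Once this is established, the universal property translates cleanly, and the biconditional follows by symmetry.
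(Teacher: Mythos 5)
Your proposal is correct and follows essentially the same route as the paper: the paper simply asserts that since \textbf{Bool} is equivalent to \textbf{Set}$^{op}$, the category \textbf{Bool-System} is equivalent to \textbf{System}$^{op}$, and concludes by the definition of pushout as pullback in the opposite category. You merely spell out the routine verification (transporting the cone shape under $G$ and transferring the universal property via $F$ and $G$ being mutually inverse) that the paper leaves implicit.
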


\begin{proof} As ${\rm\bf Bool}$ is equivalent to ${\rm\bf Set}^{op}$, the category ${\rm\bf Bool-System}$ is equivalent to the category ${\rm\bf System}^{op}$.  The result follows by definition.
\end{proof}

We are only translating diagrams in {\bf System} to diagrams in {\bf Bool-System} without losing any property at all.

A subsystem in {\bf Bool-System} is then a morphism where both components are injective. A controlled-system in {\bf Bool-System} is then a morphism where both components are surjective.  A part of a universum is also now described as an embedding (i.e., injection) rather than a projection (i.e., surjection). The intuition is thus remedied.  One way to construct topological spaces consists of gluing simple ones together by identifying subspaces. Such a construction is an instantiation of the pushout.  Similarly, we may construct complicated systems from simple ones, by identifying subsystems together along a part of the universum, to get a controlled-system.

\subsubsection{Discussion and remarks} A system restricts (or excludes) some outcomes from the universum to yield the behavior.  The map $2^\bU \ra 2^\B$ can then be seen as the object that encodes the "restrictive capability" of the system.  It acts similarly to the equations defining a system, but it is actually independent of the particular (behavioral) equation representation.  A rough analogy to this duality would be the duality between varieties and polynomial ideals.  The variety is a geometric object encoding the solution, and the polynomial ideal (or the affine coordinate ring) is an algebraic object encoding the equations.

Furthermore, an epi from $2^\bU$ to $2^\B$ induces a kernel (dual to a closure) operator on the subsets of $\bU$.  For every subset $V$ of outcomes in $\bU$, a Willems' system defines which outcomes of $V$ are possible.  The behavior of the system can also be recovered from $B=\Hom(2^B,2^{*})$. A subsystem is then an objects that is a sub-restriction, whereas a controlled system is an object that is a over-restriction. Again, the object $2^\bU \ra 2^\B$ is independent of the particular \emph{syntax}.

 We believe this line of thought for the behavioral approach has been well treated using $D$-module theory through algebraic analysis.  The link will not be investigated in this paper, we instead refer the reader to \cite{QUA2010} and \cite{OBE1990} for an initial thread.

\subsection{How essential is the category {\bf System}?}

The systems in {\bf System} are defined as injective maps $B \ra U$.  We may just keep the domain of the map, thus keeping only the behavior of the systems.  We definitely lose information. Indeed if $B \subseteq U$ and $B' \subseteq U$ are distinct subsets of $U$, we will have no way of distinguishing them in case they have the same cardinality if we forget the map.  Nevertheless, forgetting that information preserves constructions of systems, i.e. pullbacks, and may be a valid definition of systems for some instances.

This idea however suggests adding more information, rather than forgetting some, for additional effects.  Given that universa are relative, we may decide to make some elements of the universa distinguished.  We may further consider only sets with additional structure, e.g. vector spaces or in general modules over rings. Those objects inherit the same notion of interconnection, and a development may be carried out along the same line as that carried out here. Thus, any \emph{content} of the category is valid, as long as we can have a good interpretation for it as a system. It will nevertheless make a technical difference.  However, the intrinsic pattern of interconnection and variable sharing, as well as the pattern of the other ideas we have described, will remain unchanged.

{\bf Remark:} Categories other than {\bf Equation} may also act as a category of syntax.  Following the lines of the running example, we should expect circuit diagrams, if formalized properly, to act as an alternative category of syntax, when our systems are restricted to underly resistive circuits.

\section{The concluding picture}
We abstract away to the following scheme:
\begin{equation*}
  \begin{tikzcd}
    {\rm\bf System\ Syntax} \arrow[rr,"{\rm\bf Interp}"] & & {\rm\bf System\ Semantics} 
  \end{tikzcd}
\end{equation*}
Constructions both in the syntax and semantics category consist of taking pushouts, and the functor {\bf Interp} preserves (or commutes with) pushouts. Pushouts may replace pullbacks, without loss of generality, by working with the opposite categories. Pushouts may be generally replaced by colimits (or inductive limits), and then {\bf Interp} would then be required to preserve colimits.  In general categories, the role played by injective (resp. surjective) maps in {\bf Set} will be played by monic (resp. epic) morphism. Thus subsystems would correspond to monos, and controlled-systems to epis.  A mono $A \ra B$ defines $A$ as a subobject of $B$, while an epi $B \ra C$ defines $C$ as a quotient-object of $B$. In general, additional properties would be required by the monos and epis, e.g., regularity, to suit our needs. We refer to \cite{MAC1997} for further details on functorial matters.

The functor {\bf Interp} is envisioned to have more properties than what will be explicitly mentioned in this paper. For instance, in our setting, every system admits at least one behavioral equation representation, or description.  It is very pleasing to have every system have a \emph{simplest} description.  The syntax category thus provided does not afford that.  We may construct one that does by going to generalized systems.  When such a situation happens, {\bf Interp} is said to have an adjoint.

\subsection{In the case of partially ordered sets}

Every partially ordered set (poset) forms a category by declaring the elements of the set as the objects and having $\Hom(A,B)$ contain one morphism if, and only if, $A \leq B$ in the poset.  A functor between two categories induced by posets is then only an order-preserving map.

We consider, in this subsection, only posets where every pair of elements admits a least upper bound.  The operation of taking least upper bounds can be declared as a binary operation $\vee$, termed \emph{join}.  The algebras obtained are then termed join-semilattices. Taking a pushout along two morphims (when posets are viewed as categories) consists of taking the join (in the join-semilattice) of the their codomains.

We may then consider the categories {\rm\bf System Syntax} and {\rm\bf System Semantics} to be induced by posets.  The functor {\bf Interp} is then an order-preserving map that commutes with the join operation.  If $s \leq s'$, then $s$ is a subsystem of $s'$ and $s'$ is a controlled-system from $s$. We can acquire a physical interpretation if we consider {\rm\bf System Semantics} to be the lattice of behaviors over a fixed universum.

\subsection{Intended application}

We return to our main concern of uncovering phenomena that emerge from the interaction of systems.  A theory of interconnection cannot be enough to account for interaction-related effects.  Interconnecting two systems can only yield an interconnected systems.  Such effects may only emerge once we decide to focus on a feature or a property of the system, that we term \emph{phenome}.

We generally arrive at a phenome by forgetting, or concealing, information from the system. We may then think of the phenome as a simplified system.  Phenomes then live in a category and inherit a notion of interconnection through pushouts. The situation is summarized as:
\begin{equation*}
  \begin{tikzcd}
    {\rm\bf Phenome} & & {\rm\bf System} \arrow[ll,swap,"{\rm \bf Forget}"] 
  \end{tikzcd}
\end{equation*}
Whether or not new phenomena emerge upon the interaction of systems is now encoded in the functor {\bf Forget}.  New phenomes can emerge precisely when the {\bf Forget} functor does not commmute with pushouts, i.e., interconnections.  Indeed, the irrelevant things that we had forgotten actually come together and produce new observables.  If {\bf Forget} always commutes with pushouts, then the phenome of the interconnected system is simply the interconnection of the phenomes of the separate systems.
\begin{Exa}
  We reconsider the circuits (S) and (P) and augment them with \emph{external} terminals:
\begin{center}
\begin{circuitikz}[scale=0.6]\draw
  (-0.5,7.5) node[]{(S)}
  (0,9) node[label=a]{} to[R, o-o] (4,9) node[label=c]{}
  (0,6) node[label=below:b]{} to [short, o-o] (4,6) node[label=below:d]{}
  (11,9) node[label=i]{} to[open, o-o] (11,6) node[label=below:j]{};
\end{circuitikz}
\end{center}
\begin{center}
\begin{circuitikz}[scale=0.6]
  \draw
  (-0.5,1.5) node[]{(P)}
  (0,3) node[label=a]{} to[open, o-o] (0,0) node[label=below:b]{}
  (7,0) node[label=below:f]{} to [open, o-o] (7,3) node[label=e]{} -- (11,3) node[label=i]{} to[open, o-o] (11,0) node[label=below:j]{} -- (7,0)
  (9,3) node[label=g]{} to[R, o-o] (9,0) node[label=below:h]{};
\end{circuitikz}
\end{center}  
Each of the behavior underlying (S) and (P), when restricted to the variables $v_a$, $v_b$, $v_i$ and $v_j$, consists of a four dimensional $\bR$-vector space.
When we interconnect (S) and (P) by identifying $c$ to $e$ and $d$ to $f$, we obtain:
\begin{center}
\begin{circuitikz}[scale=0.6]\draw
  (0,3) node[label=a]{} to[R, o-o] (4,3) 
  (0,0) node[label=below:b]{} to [short, o-o] (4,0) 
  (4,0) node[label=below:{d=f}]{} to [open, o-o] (4,3) node[label={c=e}]{} -- (8,3) node[label=i]{} to[open, o-o] (8,0) node[label=below:j]{} -- (4,0)
  (6,3) node[label=g]{} to[R, o-o] (6,0) node[label=below:h]{};
\end{circuitikz}
\end{center}
The behavior of the underlying interconnected system, restricted again to the variables $v_a$, $v_b$, $v_i$ and $v_j$, now consists of a one dimensional $\bR$-vector space. The \emph{internal} mechanisms (declared \emph{internal} by focusing only on the terminals $a$, $b$, $i$ and $j$) in the circuits interact so as to produce new observables.
\end{Exa}
Using methods from homological algebra, we can then relate the phenome of the interconnected system to that of its subsystems, despite the presence of cascade-like effets.  We refer the reader to \cite{ADAM:Dissertation} for the details.

\bibliographystyle{alpha}
\bibliography{MyBiblio}

\section{Generalized systems}

Our systems were thus far injective set maps $B \ra U$.  We may drop the injectivity requirement and obtain the notion of a generalized system, which is only a set map.  We then establish a category {\bf Generalized-System} of generalized systems whose objects are set maps $C \ra U$, and morphisms $\bphi$ between $g:C \ra U$ and $g':C' \ra U'$ are pairs of maps $(\phi_C,\phi_U)$ such that the diagram:
  \begin{equation*}
  \begin{CD}
    C  @>g>> U\\
    @VV\phi_CV         @VV{\phi_U}V\\
    C'  @>g'>>   U'
  \end{CD}
  \end{equation*}
  commutes, i.e., such that $\phi_Ug = g'\phi_C$.

Given a generalized system $g : C \ra U$, we recover our regular interpretation of a system by reading its image, i.e., the injective map $g(C) \ra U$.  The rule mapping $g$ to $g(C) \ra U$ defines a functor from {\bf Generalized-System} to {\bf System}. Most of the notions defined earlier extend naturally to {\bf Generalized-System} while remaining intact on generalized-systems that are injective, i.e., the {\it regular} systems.

We now define our syntax category {\bf Generalized{-}Equation}.  The objects are pairs $\bphi^1, \bphi^2: g \ra g'$ of morphisms in {\bf Generalized-System} with the same domain and codomain. A morphism from $(\bphi^1, \bphi^2)$ to $(\bpsi^1,\bpsi^2)$ consists of four set maps $\tau_1, \tau_2, \tau_3, \tau_4$ such that the following diagram commutes for both values of $i$:
\begin{equation*}
  \begin{tikzcd}
    C \arrow[ddd,"\phi^i_C"] \arrow[dr,"\tau_1"] \arrow[rrr,"g"] & & & U \arrow[ddd, "\phi^i_U"] \arrow[dl, "\tau_2"]\\
    & D \arrow[r, "h"] \arrow[d, "\psi^i_D"] & V \arrow[d,"\psi^i_V"] & \\
    & D' \arrow[r, "{h'}"] & V' &     \\
    C' \arrow[rrr, "{g'}"] \arrow[ur,"\tau_3"] & & & U' \arrow[ul,"\tau_4"]
  \end{tikzcd}.
\end{equation*}

We may now define a functor:
\[
  {\rm\bf Obj{-}Eq} : {\rm\bf Generalized{-}Equation} \ra {\rm\bf Generalized{-}System}
  \]
  that sends a pair $\bphi^1, \bphi^2: g \ra g'$ to the object of its equalizer.

The objects of the category {\bf Equation} can be found in {\bf Generalized{-}Equation} as follows:
\begin{Pro}
  Let $(\bphi^1, \bphi^2)$ be an object in {\bf Generalized{-}Equation}, where $\bphi^1$ and $\bphi^2$ correspond to the following diagrams, respectively:
  \begin{equation*}
    \begin{tikzcd}
      U \arrow[r, "\id_U"] \arrow[d, "f_1"] & U \arrow[d, "*"] &&  U \arrow[r, "\id_U"] \arrow[d, "f_2"] & U \arrow[d, "*"] \\
      E \arrow[r, "*"] & \{*\} && E \arrow[r, "*"] & \{*\}
    \end{tikzcd}.
  \end{equation*}
Then {\bf Obj{-}Eq}$(\bphi^1, \bphi^2)$ is isomorphic to {\bf Arr{-}Eq}$(f_1,f_2)$. \hfill\qed
\end{Pro}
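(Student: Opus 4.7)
The plan is to compute the equalizer defining {\bf Obj{-}Eq}$(\bphi^1, \bphi^2)$ in {\rm\bf Generalized{-}System} componentwise and recognize the result as the inclusion $\B \hra U$. First I unpack the data: the common domain of $\bphi^1$ and $\bphi^2$ is the generalized system $\id_U : U \ra U$ and the common codomain is $* : E \ra \{*\}$; the $C$-components are $f_1$ and $f_2$, and both $U$-components are the unique constant map $U \ra \{*\}$.

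I then build the equalizer in {\rm\bf Generalized{-}System} componentwise. In {\rm\bf Set}, the equalizer of the pair $* = * : U \ra \{*\}$ is canonically $\id_U$, while the equalizer of $f_1, f_2 : U \ra E$ is the inclusion $i : \B \hra U$ with $\B = \{u \in U : f_1(u) = f_2(u)\}$. The commutativity of $\bphi^i$ ensures that the restriction of $\id_U$ to $\B$ automatically lies in the $U$-equalizer, inducing a map $h : \B \ra U$ which is simply $i$ itself. My candidate is therefore the generalized system $h = i$ equipped with the morphism $\bpsi = (i, \id_U) : h \dotarrow \id_U$.

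It remains to verify universality and compare with {\bf Arr{-}Eq}. Given any $\bpsi' : h' \dotarrow \id_U$ in {\rm\bf Generalized{-}System} with $\bphi^1 \bpsi' = \bphi^2 \bpsi'$, the $C$-component $\psi'_C$ factors uniquely through $i$ by the universal property of the {\rm\bf Set}-level equalizer, and the $U$-component $\psi'_U$ factors trivially through $\id_U$; the commutativity constraint defining $\bpsi'$ as a morphism of generalized systems guarantees that these two factorizations assemble into a unique morphism into $h$. This identifies {\bf Obj{-}Eq}$(\bphi^1,\bphi^2)$ with the inclusion $\B \hra U$, which coincides with {\bf Arr{-}Eq}$(f_1,f_2)$ by the earlier proposition describing behavioral equation representations as {\rm\bf Set}-equalizers. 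The main obstacle to guard against is this universality step: componentwise-limit arguments are routine for product categories, but {\rm\bf Generalized{-}System} ties its two coordinates by a commutativity constraint, so I would spell the verification out to confirm that no further identifications are imposed.
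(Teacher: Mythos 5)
Your proof is correct. The paper states this proposition with the proof omitted as routine, and your argument supplies exactly the expected verification: equalizers in {\bf Generalized-System} (an arrow category over {\bf Set}) are computed componentwise, the $U$-component equalizer of the two identical constant maps $U\ra\{*\}$ is $\id_U$, the $C$-component equalizer of $f_1,f_2$ is the inclusion $\B\hra U$, and the commutativity constraint $\psi'_U h' = \id_U\psi'_C$ is precisely what makes the two factorizations assemble uniquely, so the equalizer object is the map $\B\ra U$, i.e.\ ${\rm\bf Arr{-}Eq}(f_1,f_2)$. You were right to single out the assembly step as the only point needing care, and your check of it is sound.
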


The functor {\rm \bf Obj{-}Eq} is right adjoint to the diagonal functor that sends the generalized-system $g$ to the generalized-equation $(\id_g ,\id_g)$. Every system $C \ra U$ then has a \emph{universal} description of the form:
  \begin{equation*}
    \begin{tikzcd}
      C \arrow[r, "g"] \arrow[d, "\id_C"] & U \arrow[d, "\id_U"] &&  C \arrow[r, "g"] \arrow[d, "\id_C"] & U \arrow[d, "\id_U"] \\
      C \arrow[r, "g"] & U && C \arrow[r, "g"] & U
    \end{tikzcd}.
  \end{equation*}
The functor {\rm \bf Obj{-}Eq} then preserves pullbacks, and more generally (projective) limits. 
 
\end{document}